\definecolor{links}{RGB}{11, 85, 255}
\definecolor{cites}{RGB}{0, 200, 0}
\definecolor{urls}{RGB}{255, 116, 0}
\newcommand{\calD}{\mathcal{D}}
\newcommand{\A}{\mathcal{A}}
\newcommand{\EE}{\mathcal{E}}
\newcommand{\bin}{\mathcal{B}}
\newcommand{\disc}{\text{DISC}}
\newcommand{\cd}{\text{CD}}
\newcommand{\ef}{\text{EF}}
\newcommand{\prop}{\text{PROP}}
\newcommand{\collection}{\mathcal{S}}
\newtheorem{theorem}{Theorem}
\newtheorem{lemma}{Lemma}
\newtheorem{claim}[lemma]{Claim}
\newtheorem{definition}{Definition}
\let\originalleft\left
\let\originalright\right
\renewcommand{\left}{\mathopen{}\mathclose\bgroup\originalleft}
\renewcommand{\right}{\aftergroup\egroup\originalright}
\renewcommand{\epsilon}{\varepsilon}
\newcommand{\eps}{\varepsilon}
\title{A new lower bound for multi-color discrepancy with applications to fair division}
\author{Ioannis Caragiannis}
\affiliation{%
  \institution{Department of Computer Science, Aarhus University}
  \streetaddress{{\AA}bogade 34}
  \city{Aarhus N}
  \postcode{8200}
  \country{Denmark}
}
\author{Kasper Green Larsen}
\affiliation{%
  \institution{Department of Computer Science, Aarhus University}
  \streetaddress{{\AA}bogade 34}
  \city{Aarhus N}
  \postcode{8200}
  \country{Denmark}
}
\author{Sudarshan Shyam}
\affiliation{%
  \institution{Department of Computer Science, Aarhus University}
  \streetaddress{{\AA}bogade 34}
  \city{Aarhus N}
  \postcode{8200}
  \country{Denmark}
}
\begin{document}

\begin{abstract}
A classical problem in combinatorics seeks colorings of low discrepancy. More concretely, the goal is to color the elements of a set system so that the number of appearances of any color among the elements in each set is as balanced as possible. We present a new lower bound for multi-color discrepancy, showing that there is a set system with $n$ subsets over a set of elements in which any $k$-coloring of the elements has discrepancy at least $\Omega\left(\sqrt{\frac{n}{\ln{k}}}\right)$. This result improves the previously best-known lower bound of $\Omega\left(\sqrt{\frac{n}{k}}\right)$ of~\citet{DS03} and may have several applications. Here, we explore its implications on the feasibility of fair division concepts for instances with $n$ agents having valuations for a set of indivisible items. The first such concept is known as consensus $1/k$-division up to $d$ items (\cd$d$) and aims to allocate the items into $k$ bundles so that no matter which bundle each agent is assigned to, the allocation is envy-free up to $d$ items. The above lower bound implies that \cd$d$ can be infeasible for $d\in \Omega\left(\sqrt{\frac{n}{\ln{k}}}\right)$. We furthermore extend our proof technique to show that there exist instances of the problem of allocating indivisible items to $k$ groups of $n$ agents in total so that envy-freeness and proportionality up to $d$ items are infeasible for $d\in \Omega\left(\sqrt{\frac{n}{k\ln{k}}}\right)$ and $d\in \Omega\left(\sqrt{\frac{n}{k^3\ln{k}}}\right)$, respectively. The  lower bounds for fair division improve the currently best-known ones by  \citet{MS22}.
\end{abstract}

\maketitle
\setcounter{page}{1}

\section{Introduction}
Allocating indivisible items to agents with valuations for them has been a key problem in {\em fair division}. The notion of envy-freeness up to one item (EF1), introduced by~\citet{B11} is a well-established fairness notion today. An allocation of items to agents is EF1 if every agent (weakly) prefers the bundle of items allocated to her to the bundle of items allocated to any other agent after removing one item from the latter. In contrast to the notion of envy-freeness, which is very demanding for indivisible items, EF1 can always be achieved in a number of different ways: via the envy-cycle elimination algorithm of~\citet{LMMS04}, the folklore round-robin algorithm, while it is compatible with Pareto-optimality~\citep{CKMPSW19}.

A natural generalization of the standard fair division setting with indivisible items assumes that agents with different valuations for a set of items are partitioned into groups. In this setting, an allocation has one bundle per group, and the agent gets value for the items in the bundle allocated to her group. Unfortunately, EF1 allocations may not exist in this setting~\citep{KSV20}. Actually, even the further relaxed notion of envy-freeness up to $d$ items (EF$d$) may be infeasible even when $d$ is a function of the number of agents and the number of groups. \citet{MS22} show that even envy-freeness up to $\Omega\left(\sqrt{n}/{k^2}\right)$ items can be infeasible for instances with $n$ agents partitioned into $k$ groups. Non-trivial positive results are also known; EF$d$ allocations do exist for $d\in O\left(\sqrt{n}\right)$ in all instances with $n$ agents. Notice that this bound does not depend on the number of groups.

Other fairness properties that have been considered for groups of agents include relaxations of proportionality. An allocation of items to agents partitioned into $k$ groups is proportional up to $d$ items (PROP$d$) if the value that each agent has for the bundle of items allocated to her group together with the $d$ most valuable items not allocated to her group is at least $1/k$ times her total value for all items. The simplest version of PROP$1$ was introduced by~\citet{CFS17} and is considered for groups of agents by~\citet{MS22}, who prove similar bounds on $d$ for the feasibility of PROP$d$ allocations with those for EF$d$ mentioned above. Another relevant fairness notion is known as consensus $1/k$-division up to $d$ items (CD$d$). Here, a partition of the items to $k$ bundles is CD$d$ for a set of $n$ agents with valuations for the items if the values an agent has for any pair of items differ by at most $d$. \citet{MS22} prove that CD$d$ partitions always exist for $d\in O\left(\sqrt{n}\right)$ and may not exist for $d\in \Omega\left(\sqrt{n/k}\right)$.  

In their proofs, \citet{MS22} exploit several problems and statements from {\em discrepancy theory}. For example, their lower bound for \cd$d$ follows after establishing a connection between $k$-allocations and $k$-colorings of set systems. Given a set system consisting of a universe of elements and a collection of $n$ element subsets, a $k$-coloring of the elements has a discrepancy $d$ if the number of elements colored with any color in each subset $S$ is between $|S|/k-d$ and $|S|/k+d$. Intuitively, we may think of the elements of a set system as the items of a corresponding fair division instance. Each set corresponds to a distinct agent with valuation $1$ for each item corresponding to an element in her set and valuation $0$ for any other item. A $k$-coloring of the elements directly defines an allocation of the items into $k$ bundles. Now, the relation between the minimum discrepancy among all $k$-colorings of the set system and the minimum value of $d$ for which a \cd$d$ allocation exists in the corresponding fair division instance should be clear. The lower bound of~\citet{MS22} on \cd$d$ exploits such a relation and follows directly by a lower bound on multi-color discrepancy due to~\citet{DS03}. Other notions, such as the weighted discrepancy of $2$-colorings, are used by~\citet{MS22} to get upper and lower bounds on \ef$d$ and \prop$d$.

\subsection{Our contribution}
In this paper, we improve the bounds on $d$ for which CD$d$ partitions and EF$d$ and PROP$d$ allocations may not exist. Our new bounds are $\Omega\left(\sqrt{\frac{n}{\ln{k}}}\right)$, $\Omega\left(\sqrt{\frac{n}{k\ln{k}}}\right)$, and $\Omega\left(\sqrt{\frac{n}{k^3\ln{k}}}\right)$ and apply even to instances with binary agent valuations. For instances with $n_h$ agents in group $h \in [k]$, we also have another lower bound of $ d \in \Omega \left( \sqrt{\frac{\min\{n_1,n_2,...,n_k\}}{\ln k}} \right)$ for \prop$d$. For CD$d$, we exploit the relation of CD$d$ partitions on instances with binary valuations with $k$-colorings in set systems that have discrepancy $d$. So, our new lower bound for CD$d$ follows directly by a new lower bound of $\Omega\left(\sqrt{\frac{n}{\ln{k}}}\right)$ on multi-color discrepancy. This improves the twenty-year-old bound of $\Omega\left(\sqrt{n/k}\right)$ by~\citet{DS03} and may have applications to other areas as well. In our proof, we define a probability distribution over set systems with $n$ subsets over an appropriately defined collection of elements and show that the probability that no $k$-coloring has discrepancy at most $d$ in all sets of the set system returned by the distribution is less than $1$. This probabilistic argument implies that there exists a set system (one of those in the support of the distribution) for which any $k$-coloring has discrepancy more than $d$.

In contrast to the approach of~\citet{MS22}, our lower bounds for EF$d$ and PROP$d$ do not follow by applying bounds from discrepancy theory as black boxes. Instead, adapting our construction and proof for our multi-color discrepancy lower bound, we prove that there exist instances with $n$ agents partitioned into $k$ groups and having binary valuations for a set of items, so that any allocation of the items to the $k$ groups is not EF$d$ or PROP$d$ for the values of $d$ claimed above.

\subsection{Further related work}
Existing work on fair division concepts for groups of agents has focused on relaxations of envy-freeness. The results of~\citet{MS22} on EF$d$ improve considerably previous ones by~\citet{KSV20} and~\citet{SS19}. \citet{MS17} study EF$d$ for groups of agents, assuming that the values are drawn independently from a common probability distribution. Consensus $1/k$-division up to $d$ items generalizes and relaxes the consensus halving problem (e.g., see~\citet{A87,SS03}) from two groups and divisible items to multiple groups and indivisible items. Different fairness notions that involve groups of agents in their definitions are studied by~\citet{CFSV19} and~\citet{AR20}.

The rich literature on discrepancy theory is the topic of several books, e.g., see~\cite{C00,M99,CST14}. Classical bounds on the discrepancy of $2$-colorings of set systems follow by the seminal work of~\citet{AS00} and~\citet{S85}. Extensions to multi-colorings are considered by~\citet{DS03}. The problem is equivalent to coloring the nodes of a hypergraph so that all colors are used approximately the same number of times in each hyperedge. In the discrepancy theory literature, bounds on $2$- or multi-color discrepancy are expressed in terms of the number of nodes, the number of hyperedges, and the number of colors. Here, we aim to bound the minimum multi-color discrepancy using only the number of sets (i.e., hyperedges) and colors as parameters. \citet{MS22} explain how existing discrepancy theory results can be expressed under this terminology, yielding the currently best upper and lower bounds of $O(\sqrt{n})$ and $\Omega\left(\sqrt{n/k}\right)$, respectively.

\subsection{Roadmap}
The rest of the paper is structured as follows. We present formal definitions of the notions we study in Section~\ref{sec:prelim}. The lower bound for multi-color discrepancy and its implication to consensus $1/k$-division up to $d$ items is presented in Section~\ref{sec:disc}. The lower bounds for EF$d$ and PROP$d$ are presented in Section~\ref{sec:ef-prop}. We conclude in Section~\ref{sec:open}. 

\section{Preliminaries}\label{sec:prelim}
We consider fair division settings with a set of $n$ agents with (additive) valuations for $m$ indivisible items. Using the notation $[t]=\{1, 2, ..., t\}$ for integer $t\geq 1$, we identify both agents and items as positive integers in $[n]$ and $[m]$, respectively. Each agent $i\in [n]$ has a non-negative valuation $v_i(j)$ for item $j\in [m]$. The valuation of an agent for a set of items $S\subseteq [m]$ is then simply $v_i(S)=\sum_{j\in S}{v_i(j)}$.

Given an integer $k\geq 2$, a $k$-allocation $A=(A_1, A_2, ..., A_k)$, is simply an ordered partition of the items into $k$ bundles. We consider three fairness notions. The first one, called consensus $1/k$-division up to $d$ items, is defined as follows.

\begin{definition}[consensus $1/k$-division up to $d$ items]
    Given a set of $n$ agents with valuations for a set of items and an integer $k\geq 2$, a $k$-allocation $A=(A_1, ..., A_k)$ of the items to $k$ bundles is a consensus $1/k$-division up to $d$ items (or $\emph{CD}d$, for short) if for every agent $i\in [n]$ and every pair of integers $h$ and $\ell$ from $[k]$, there is a set $B$ of at most $d$ items from bundle $A_\ell$ so that $v_i(A_h)\geq v_i(A_\ell\setminus B)$.
\end{definition}

We are interested in the minimum value of $d$ so that CD$d$ $k$-allocations exist for all instances with $n$ agents.

\begin{definition}
Given parameters $n$ and $k$, we denote by $\emph{\cd}(n,k)$ the minimum value of $d$ so that for any instances with $n$ agents having valuations for a set of items, there is a \emph{\cd}$d$ $k$-allocation.
\end{definition}

Our next two fairness notions extend well-known relaxations of envy-freeness and proportionality to groups of agents. We consider settings in which $n$ agents are partitioned into $k$ groups. Again, we identify the groups by positive integers in $[k]$. We will denote by $n_h$ the number of agents in group $h\in [k]$ and by $g(i)$ the group to which agent $i\in [n]$ belongs.

\begin{definition}[envy-freeness up to $d$ items]
    Given a set of $n$ agents partitioned into $k$ groups, a $k$-allocation is {\em envy-free up to $d$ items} (or \emph{EF}$d$ for short) if for every agent $i$ and group $h\in [k]$, there is a set $B$ of at most $d$ items so that $v_i(A_{g(i)})\geq v_i(A_h\setminus B)$.
\end{definition}

\begin{definition}[proportionality up to $d$ items]
    Given a set of $n$ agents partitioned into $k$ groups, a $k$-allocation is {\em proportional up to $d$ items} (or \emph{PROP}$d$, for short) if for every agent $i$, there is a set $B$ of at most $d$ items not allocated to group $g(i)$ (i.e., $B\subseteq [m]\setminus A_{g(i)}$) so that $v_i(A_{g(i)})\geq \frac{1}{k}\cdot \sum_{g\in [m]}{v_i(g)}-v_i(B)$.
\end{definition}

Again, we are interested in the minimum value of $d$ so that EF$d$ and PROP$d$ $k$-allocations exist for all instances with $n$ agents partitioned into $k$ groups. 

\begin{definition}
Given $k\geq 2$ positive integer parameters $n_1, n_2, ..., n_k$, we denote by \emph{\ef}$(n_1, ..., n_k)$ (respectively, \emph{\prop}$(n_1, ..., n_k)$) the minimum value of $d$ so that, for any instances with $n_h$ agents in group $h\in [k]$ with valuations for a set of items, there exist an \emph{EF}$d$ (respectively, \emph{\prop}$d$) $k$-allocation.
\end{definition}

The notion of \cd$d$ is strongly related to a notion from discrepancy theory. In the multi-color discrepancy problem, we are given a set system $(U,\collection)$ consisting of a universe of elements $U$ and a collection $\collection$  with $n$ subsets $S_1$, $S_2$, ..., $S_n$ of $U$. A $k$-coloring $\chi$ of the set system $(U,\collection)$ is simply an assignment of colors from $[k]$ to the elements of $U$, with $\chi(s)$ denoting the color given to the element $s\in U$ by $\chi$. Given a $k$-coloring $\chi$ for the elements of a universe $U$ and a color $h\in [k]$, we denote by $\chi^{-1}(h)$ the set of elements that are colored with $h$ under $\chi$. We are interested in studying the discrepancy of $k$-colorings defined as follows.

\begin{definition}[multi-color discrepancy]
    A $k$-coloring of a set system $(U,\collection)$ with a universe of elements $U$ and a collection $\collection$ of $n$ sets has discrepancy $d$ if 
    \[\max_{h\in [k]}\max_{i\in [n]}{\left||\chi^{-1}(h)|-\frac{|S_i|}{k}\right|} \leq d.\]
\end{definition}
Similarly to the fairness notions above, we are interested in the minimum value of $d$ for which $k$-colorings of discrepancy $d$ always exist.

\begin{definition}
    Given integers $k\geq 2$ and $n$, we denote by \emph{\disc}$(n,k)$ the minimum value of $d$ so that all set systems with a collection consisting of $n$ element subsets have a $k$-coloring of discrepancy $d$.
\end{definition}

In the next two sections, we present new lower bounds for the quantities \disc$(n,k)$, \cd$(n,k)$, \ef$(n_1, ..., n_k)$, and \prop$(n_1, ..., n_k)$ defined above. Our results are summarized in Table~\ref{tab:results}, together with a comparison with the previous best-known lower bounds from the literature.

\setlength{\extrarowheight}{10pt}
\begin{table}[h]
    \centering\small
    \begin{tabular}{|c|c|c|l|}\hline
    quantity & our bound & previous bound & reference\\\hline
         $\disc(n,k)$ & $\Omega\left(\sqrt{\frac{n}{\ln{k}}}\right)$ & $\Omega\left(\sqrt{\frac{n}{k}}\right)$ & \cite{DS03}\\
         $\cd(n,k)$ & $\Omega\left(\sqrt{\frac{n}{\ln{k}}}\right)$ & $\Omega\left(\sqrt{\frac{n}{k}}\right)$ & \cite{MS22}\\
         $\ef(n_1, ..., n_k)$ & $\Omega\left(\sqrt{\frac{n}{k\ln{k}}}\right)$ & $\Omega\left(\sqrt{\frac{\max\{n_1, ..., n_k\}}{k^3}}\right)$ & \cite{MS22}\\
         $\prop(n_1, ..., n_k)$ & \begin{tabular}{@{}c@{}}$\Omega\left(\sqrt{\frac{n}{k^3\ln{k}}}\right)$ \\ $\Omega \left( \sqrt{\frac{\min\{n_1,n_2,...,n_k\}}{\ln k}} \right)$\end{tabular} & $\Omega\left(\sqrt{\frac{\max\{n_1, ..., n_k\}}{k^3}}\right)$ & \cite{MS22}
         \\\hline
    \end{tabular}
    \caption{Our lower bounds compared to previous work. For \disc, $n$ and $k$ denote the number of sets and colors, respectively. For the remaining quantities, $n$ and $k$ denote the number of agents and groups/bundles, respectively. For \ef\ and \prop, $n_h$ denotes the number of agents in the $h$-th group; it is $n=n_1+...+n_k$. Notice that $\max\{n_1,..., n_k\}$ can be as low as $n/k$. So, the previous lower bounds for \ef\ and \prop\ can be as low as $\Omega\left(\sqrt{n}/k^2\right)$. Also, $\min\{n_1,n_2, ..., n_k\}$ can be as high as $n/k$, so our second lower bound for \prop\ can be as high as $\Omega\left(\sqrt{\frac{n}{k\ln{k}}}\right)$.}
    \label{tab:results}
\end{table}

In our proofs, we use extensively an anti-concentration bound for the binomial probability distribution $\bin(t,1/2)$ with $t$ trials with a success probability of $1/2$ per trial. In other words, we focus on random variables defined as the sum $\sum_{j=1}^{t}{X_i}$ of $t$ independent and identically distributed Bernoulli random variables $X_1$, $X_2$, ..., $X_t$ with $\Pr[X_i=1]=1/2$ for $i\in [t]$. The following lemma has been proved by~\citet{KY15} and bounds from below the probability that the random variable is considerably smaller or considerably larger than its expectation $t/2$.

\begin{lemma}[Reverse Chernoff bound, e.g., see~\cite{KY15}]\label{lem:reverse-Chernoff}
Let $X\sim \bin(t,1/2)$. Then, for every $\eps\in (0,1/2]$ so that $\eps^2t\geq 6$, it holds
\begin{align*}
        \Pr\left[X\leq \frac{t}{2}\cdot (1-\eps)\right] &\geq \exp\left(-\frac{9\eps^2t}{2}\right)
\end{align*}
and 
\begin{align*}
        \Pr\left[X\geq \frac{t}{2}\cdot (1+\eps)\right] &\geq \exp\left(-\frac{9\eps^2t}{2}\right).
\end{align*}
\end{lemma}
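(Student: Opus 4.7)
The plan is to combine Stirling's approximation for $\binom{t}{j}$ with a short summation trick to bridge the gap between a pointwise lower bound on the binomial mass function and the exponential tail bound claimed. First, because $\bin(t,1/2)$ is symmetric about $t/2$ under the involution $X\mapsto t-X$, the two inequalities in the lemma are equivalent; it suffices to establish, say, $\Pr[X\geq (t/2)(1+\eps)]\geq \exp(-9\eps^2 t/2)$. Set $m^\star=\lceil (1+\eps)t/2\rceil$ for the target cutoff.

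Applying Stirling's formula $n!=\sqrt{2\pi n}(n/e)^n(1+O(1/n))$ to each factorial in $\binom{t}{m^\star}/2^t$ and simplifying yields
\[
\Pr[X=m^\star] \;\geq\; \frac{c_1}{\sqrt{t}}\,\exp\bigl(-t\cdot D\bigl((1+\eps)/2\,\|\,1/2\bigr)\bigr),
\]
where $D(p\,\|\,1/2)=p\ln(2p)+(1-p)\ln(2(1-p))$ is the binary KL divergence. A direct Taylor expansion gives the clean identity
\[
D\bigl((1+\eps)/2\,\|\,1/2\bigr) \;=\; \frac{\eps^2}{2}+\frac{\eps^4}{12}+\frac{\eps^6}{30}+\cdots \;\leq\; \eps^2
\]
for every $\eps\in(0,1/2]$. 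To erase the lossy $1/\sqrt{t}$ prefactor, I would then sum $\Pr[X=j]$ over a window $j\in\{m^\star,\ldots,m^\star+\lfloor c_2/\eps\rfloor\}$: the consecutive ratio $\binom{t}{j+1}/\binom{t}{j}=(t-j)/(j+1)$ stays close to $(1-\eps)/(1+\eps)$ throughout this window, so the sum behaves like a geometric series of ratio $\approx 1-2\eps$ and contributes an extra $\Theta(1/\eps)$ factor, yielding a bound of the form $\Pr[X\geq (1+\eps)t/2]\geq c_3\,\exp(-\eps^2 t)/\sqrt{\eps^2 t}$.

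The main obstacle is purely a matter of constants and case analysis. The slack between the true exponent $\eps^2/2+o(\eps^2)$ coming from the Taylor series and the stated $9\eps^2 t/2$ in the lemma is precisely what absorbs (i) the $O(\log t)$ loss from the Stirling prefactor and (ii) the $O(\log(1/\eps))$ loss incurred by the geometric summation. The hypothesis $\eps^2 t\geq 6$ is exactly the quantitative condition making the bookkeeping go through: the gap $\tfrac{9}{2}\eps^2 t-\eps^2 t=\tfrac{7}{2}\eps^2 t\geq 21$ easily dominates the $\tfrac{1}{2}\log(\eps^2 t)$ polynomial loss for every admissible pair $(\eps,t)$, leaving plenty of room to swallow the multiplicative constant $c_3$.
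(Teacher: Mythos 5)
The paper never proves this lemma at all: it is imported as a black box with the citation to \citet{KY15}, so there is no internal proof to compare against, and your sketch has to be judged on its own merits --- on which it holds up. Your route (symmetry of $\bin(t,1/2)$ to reduce to one tail, Stirling's formula for the pointwise bound $\Pr[X=m^\star]\geq c_1 t^{-1/2}\exp\left(-t\,D\left(\tfrac{1+\eps}{2}\,\middle\|\,\tfrac12\right)\right)$, the expansion $D\left(\tfrac{1+\eps}{2}\,\middle\|\,\tfrac12\right)=\sum_{j\geq 1}\tfrac{\eps^{2j}}{2j(2j-1)}\leq \eps^2$, and a window summation of length $\Theta(1/\eps)$ to absorb the polynomial prefactor) is exactly the standard elementary argument for reverse Chernoff bounds and is in the same spirit as the original Klein--Young derivation. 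Your accounting of the slack is also right: you need $c_3\,e^{-\eps^2 t}/\sqrt{\eps^2 t}\geq e^{-9\eps^2 t/2}$, i.e.\ $e^{7\eps^2 t/2}\geq \sqrt{\eps^2 t}/c_3$, and under the hypothesis $\eps^2 t\geq 6$ the left side is at least $e^{21}$, which dominates any absolute constant your Stirling and window steps could produce. Two finishing touches if you write this out in full: (i) bound the per-step ratio by $\frac{t-j}{j+1}\geq \frac{1-\eps'}{1+\eps'}\geq e^{-4\eps'}$ (where $\eps'$ is a mild perturbation of $\eps$ accounting for the ceiling in $m^\star$ and the window width) rather than by $1-2\eps$, since $1-2\eps$ vanishes at the allowed endpoint $\eps=1/2$ and the geometric lower bound would degenerate there, whereas $\frac{1-\eps}{1+\eps}=\tfrac13$ at $\eps=\tfrac12$ is perfectly usable; and (ii) record that $\eps\leq 1/2$ and $\eps^2 t\geq 6$ force $t\geq 24$ and $1/\eps\leq \eps t/6$, which keeps the Stirling constants uniform, keeps $m^\star\leq 3t/4+O(1)$ away from the boundary, and guarantees the window $\{m^\star,\dots,m^\star+\lfloor c_2/\eps\rfloor\}$ stays inside $\{0,\dots,t\}$. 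With those routine details pinned down, your outline is a complete, correct, self-contained proof of a statement the paper itself only cites.
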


\section{A lower bound for multi-color discrepancy}\label{sec:disc}
We devote this section to proving our lower bound for $\disc(n,k)$.
\begin{theorem}\label{thm:discr}
Let $k\geq 3  + 6e^{48}$ and $n\geq 1+147\cdot e^{48} \ln{k}$ be integers.
    There exists a set system with $n$ sets over a set of elements, so that any $k$-coloring of its elements has discrepancy at least $\Omega\left(\sqrt{\frac{n}{\ln{k}}}\right)$.
\end{theorem}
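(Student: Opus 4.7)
The plan is to prove the lower bound via the probabilistic method: I will define a random distribution over set systems with $n$ sets and show that with strictly positive probability, every $k$-coloring of the elements has discrepancy larger than $d$ for some $d = \Omega(\sqrt{n/\ln k})$. The parameters will end up balancing as $m \approx nk/\ln k$ for the universe size and $d \approx c\sqrt{m/k}$ for a suitably small constant $c$.

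\emph{Random construction.} Let $U = [m]$ and let each $S_i$ be an independent random subset of $U$ in which every element is included with probability $1/2$. For any fixed $k$-coloring $\chi$ with class sizes $t_1,\ldots,t_k$, the counts $A_h^i := |S_i \cap \chi^{-1}(h)|$ are independent across $i$ and distributed as $\bin(t_h,1/2)$.

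\emph{Per-coloring probability bound.} The core of the argument is to control $\Pr[\text{disc}(\chi)\leq d]$ for each fixed $\chi$. Colorings that are far from balanced (some $t_h$ differs substantially from $m/k$) have $\E[A_h^i - |S_i|/k] = (t_h - m/k)/2$ bounded away from $0$ for the extremal color, and standard Chernoff bounds make the event $|A_h^i - |S_i|/k|\leq d$ exponentially unlikely, eliminating such colorings. For approximately balanced colorings ($t_h \approx m/k$ for all $h$) I would condition on $|S_i|$ being close to its mean $m/2$ and analyze the joint distribution of $(A_1^i,\ldots,A_k^i)$, which is multivariate hypergeometric with mean vector $\approx(m/(2k),\ldots,m/(2k))$ and per-coordinate standard deviation $\Theta(\sqrt{m/k})$. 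Lemma~\ref{lem:reverse-Chernoff} supplies the marginal anti-concentration for each color class, and since the $k$ counts are linked only through the single linear constraint $\sum_h A_h^i = |S_i|$, the joint probability that every $|A_h^i - |S_i|/k|\leq d$ should be bounded by a quantity that is exponentially small in $k$, of order $\exp(-\Omega(k))$ when $d$ is a sufficiently small constant multiple of $\sqrt{m/k}$.

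\emph{Union bound and parameter choice.} Since the $n$ sets are independent, the per-coloring probability becomes the $n$-th power of the per-set bound, and summing over the at most $k^m$ colorings yields
\[
\Pr[\exists\,\chi:\ \text{disc}(\chi)\leq d] \leq k^m \cdot \exp(-\Omega(nk)).
\]
Choosing $m = \Theta(nk/\ln k)$ makes this strictly less than $1$, so such a set system exists. Back-substituting, $d = c\sqrt{m/k} = \Theta(\sqrt{n/\ln k})$, as claimed. The explicit lower bounds on $k$ and $n$ in the theorem statement come from the admissibility condition $\eps^2 t\geq 6$ in Lemma~\ref{lem:reverse-Chernoff} together with the precise constants needed to force the union bound to be strictly below~$1$.

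\emph{Main obstacle.} The hardest step is the joint anti-concentration in the balanced case: Lemma~\ref{lem:reverse-Chernoff} handles one color class at a time, but the $k$ counts $A_h^i$ are coupled by $\sum_h A_h^i = |S_i|$, so the events $\{|A_h^i - |S_i|/k|\leq d\}$ across $h$ are not independent. Showing that this coupling costs only a single degree of freedom---so that the joint probability still decays essentially like the product of the $k$ marginals, producing the $\exp(-\Omega(k))$ factor---is exactly what upgrades the Doerr--Srivastav $\sqrt{n/k}$ bound to $\sqrt{n/\ln k}$. Everything else is routine probabilistic-method bookkeeping.
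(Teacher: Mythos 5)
Your skeleton coincides with the paper's at the top level (random sets with inclusion probability $1/2$, a union bound over at most $k^m$ colorings, $m=\Theta(nk/\ln k)$, $d=\Theta(\sqrt{m/k})$, and a per-set survival probability of $e^{-\Omega(k)}$), but both steps that carry the actual difficulty have genuine gaps. First, the unbalanced case: a Chernoff bound does \emph{not} eliminate moderately unbalanced colorings. If some class size is $t_h=m/k+\Delta$, the quantity $A_h^i-|S_i|/k$ has mean $\Delta/2$ but standard deviation $\Theta(\sqrt{m/k})$, so for the per-set pass probability to be $e^{-\Omega(k)}$ (which is what the union bound against $k^m=e^{\Theta(nk)}$ colorings forces once $m=\Theta(nk/\ln k)$) you need $\Delta\gtrsim\sqrt{k}\cdot\sqrt{m/k}=\sqrt{m}$. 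Colorings with imbalance $\Delta$ between roughly $d$ and $\sqrt{k}\,d$ pass each set with constant probability, yielding only $e^{-O(n)}$ per coloring, which loses badly to $k^m$. The paper closes this hole with a deterministic device absent from your proposal: it puts the full set $S_0=U$ into the system, so any coloring with $\bigl||\chi^{-1}(h)|-m/k\bigr|>d$ for some $h$ has discrepancy exceeding $d$ with probability $1$, and the probabilistic analysis is run only over colorings balanced to within $\pm d$.

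Second, the step you yourself flag as the main obstacle---joint anti-concentration of the $k$ counts measured against the random centering $|S_i|/k$---is the crux of your route and is left unproven; ``should be bounded by'' is an aspiration, not an argument, and after conditioning on $|S_i|$ the pass events for the multivariate hypergeometric vector are non-monotone, so negative-association tools do not apply off the shelf. The paper never faces this problem. Note first that for fixed $i$ the counts $|\chi^{-1}(h)\cap S_i|$ are already mutually independent across $h$ (disjoint color classes, independent coins); the only coupling is through the comparison point $|S_i|/k$. Lemma~\ref{lem:badcoloring} removes it: split the colors into two halves and ask for one color $h_1$ in the first half with $|\chi^{-1}(h_1)\cap S_i|<\frac{m}{2k}-d$ and one color $h_2$ in the second half with $|\chi^{-1}(h_2)\cap S_i|>\frac{m}{2k}+d$. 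Whatever $|S_i|$ turns out to be, $|S_i|/k$ lies on one side of $\frac{m}{2k}$, so one of $h_1,h_2$ deviates from $|S_i|/k$ by more than $d$---only \emph{two} deviating colors are needed, and the thresholds are fixed rather than random. Each such one-sided event has probability at least $e^{-48}$ by the reverse Chernoff bound (Lemma~\ref{lem:reverse-Chernoff}), the events within each half are mutually independent, and the per-set survival probability is at most $2\left(1-e^{-48}\right)^{(k-1)/2}=e^{-\Omega(k)}$ by a plain product. Without this pairing trick (or an actual proof of your hypergeometric claim), and without $S_0$ to kill the unbalanced colorings, your proposal is a program rather than a proof; with them, it becomes exactly the paper's argument.
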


We will prove Theorem~\ref{thm:discr} using a probabilistic argument.  For appropriate parameters $m$ and $d$, we will define a random set system consisting of $n$ subsets of $m$ elements,\footnote{We remark that, in our proof, we select the number of elements so that our lower bound on $d$ as a function of $n$ and $k$ is as high as possible. If we are further constrained by the number of elements $m$, we can slightly modify our proof to get a discrepancy lower bound of $\Omega\left(\sqrt{\frac{m}{k}\ln{\frac{nk}{m\ln{k}}}}\right)$.} such that the probability that there is some $k$-coloring with discrepancy at most $d$ is strictly smaller than $1$. This will imply that there exists a set system for which any $k$-coloring of the elements has discrepancy higher than $d$, proving the desired lower bound.

\begin{proof}
We will use the construction below with $m=\left\lfloor \frac{(n-1)k}{3 e^{48}\ln{k}}\right\rfloor$ and $d=\sqrt{\frac{m}{k}}$.  Together with the restriction $n\geq 1+147\cdot e^{48}\ln{k}$, these definitions imply that $d\in \Omega\left(\sqrt{\frac{n}{\ln{k}}}\right)$.  Furthermore,  notice that $\sqrt{m/k} = \sqrt{\frac{1}{k} \left\lfloor \frac{(n-1)k}{3 e^{48}\ln{k}}\right\rfloor} \geq \sqrt{\frac{1}{k} \left\lfloor 49k\right\rfloor} = 7$ and, thus, $d\leq \frac{m}{7k}$; this property will be useful later in the proof.

The construction is as follows.  There is a set $S_0$ that includes all elements and $n-1$ sets $S_1$, $S_2$, ..., $S_{n-1}$ defined in the following way: for $i\in [n-1]$ and $j\in[m]$,  element $j$ is included in set $S_i$ with probability $1/2$.  All random events are independent.

We denote by $X$ the set of $k$-colorings $\chi$ satisfying
\begin{align}\label{eq:disc-bounds}    
    \frac{m}{k}-d &\leq \left|\chi^{-1}(h)\right|\leq \frac{m}{k}+d
\end{align}
for every $h\in [k]$.  Clearly, a $k$-coloring that does not belong to $X$ has discrepancy higher than $d$ for some color and set $S_0$, since it must be 
\begin{align*}
    \left| \left|\chi^{-1}(h) \cap S_0 \right| - \frac{|S_0|}{k}\right| = \left| \left|\chi^{-1}(h) \right| - \frac{m}{k}\right| > d
\end{align*}
 for some color $h \in [k]$.

Consider a $k$-coloring $\chi\in X$. We will show that the probability that $\chi$ has discrepancy at most $d$ for the random set system above is less than $k^{-m}$. As there are at most $k^m$ different colorings, this implies that the probability that none of them has discrepancy $d$ is positive. This is sufficient to prove that a set system with no $k$-coloring of discrepancy at most $d$ exists, completing the proof of the theorem.

For $i \in [n-1]$ and $h \in \left\{1,2,...,\lfloor k/2 \rfloor\right\}$ denote by $\EE_{i,h}$ the event defined as
\begin{align*}
    \left| \chi^{-1}(h) \cap S_i\right| < \frac{m}{2k} - d.
\end{align*}
Similarly, for $i \in [n-1]$ and $h \in \left\{\lfloor k/2 \rfloor+1,...,k\right\}$, denote by $\EE_{i,h}$ the event defined as 
\begin{align*}
    \left| \chi^{-1}(h) \cap S_i\right| > \frac{m}{2k}+ d.  
\end{align*}

The next lemma provides a sufficient condition so that combinations of events $\EE_{i,h}$ yield high discrepancy for coloring $\chi$.
\begin{lemma} \label{lem:badcoloring}
    If there exists $i \in [n-1]$, $h_1 \in \{1,2,...,\lfloor k/2 \rfloor\}$, and $h_2 \in \{\lfloor k/2 \rfloor+1,...,k\}$ such that both events $\EE_{i,h_1}$ and $\EE_{i,h_2}$ are true, the discrepancy of coloring $\chi$ is greater than $d$.
\end{lemma}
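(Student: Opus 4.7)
The plan is a short case analysis. The two hypothesized events force the counts $|\chi^{-1}(h_1) \cap S_i|$ and $|\chi^{-1}(h_2) \cap S_i|$ to lie more than $2d$ apart: the former is strictly below $m/(2k) - d$ and the latter is strictly above $m/(2k) + d$. Since $|S_i|/k$ is a single number, it cannot be within $d$ of both, so on set $S_i$ at least one of the two colors must witness a deviation from $|S_i|/k$ larger than $d$. I would make this explicit by splitting on the sign of $|S_i|/k - m/(2k)$, i.e., by comparing the actual set size to its ``typical'' value $m/2$.

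Concretely, assume both $\EE_{i,h_1}$ and $\EE_{i,h_2}$ hold. If $|S_i| \leq m/2$, then $|S_i|/k \leq m/(2k)$, and
\[
|\chi^{-1}(h_2) \cap S_i| - \frac{|S_i|}{k} > \left(\frac{m}{2k} + d\right) - \frac{m}{2k} = d,
\]
so color $h_2$ on set $S_i$ already witnesses discrepancy exceeding $d$. Otherwise $|S_i| > m/2$, giving $|S_i|/k > m/(2k)$, so
\[
\frac{|S_i|}{k} - |\chi^{-1}(h_1) \cap S_i| > \frac{m}{2k} - \left(\frac{m}{2k} - d\right) = d,
\]
and now color $h_1$ on set $S_i$ witnesses the same conclusion. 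Either way, the discrepancy of $\chi$ is greater than $d$.

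The statement is really just about the geometry of the three numbers $|\chi^{-1}(h_1) \cap S_i|$, $|\chi^{-1}(h_2) \cap S_i|$, and $|S_i|/k$, so there is no serious obstacle; the pivot $m/(2k)$ is chosen precisely to make the threshold inequalities defining $\EE_{i,h_1}$ and $\EE_{i,h_2}$ symmetric around it. The parameters $n$ and $k$, the anti-concentration estimate (Lemma~\ref{lem:reverse-Chernoff}), and the specific choice $d = \sqrt{m/k}$ are not needed here; they will only enter later, when we use this lemma to argue that with positive probability some pair $(h_1, h_2)$ triggers both events for a common index $i$, thereby preventing every $\chi \in X$ from having low discrepancy.
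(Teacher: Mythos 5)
Your proof is correct and takes essentially the same approach as the paper: the identical case split on whether $|S_i|/k$ exceeds the pivot $m/(2k)$, using color $h_1$ when $|S_i|/k > m/(2k)$ and color $h_2$ otherwise, with the strict event inequalities yielding discrepancy strictly greater than $d$. (Incidentally, your displays correctly write $|\chi^{-1}(h_j) \cap S_i|$ where the paper's own proof drops the intersection with $S_i$ in a notational slip.)
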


\begin{proof}
    Notice that if the events $\EE_{i,h_1}$ and $\EE_{i,h_2}$ are true, we have $\left| \chi^{-1}(h_1) \cap S_i\right| < \frac{m}{2k}- d$ and $\left| \chi^{-1}(h_2) \cap S_i\right| > \frac{m}{2k}+ d$, respectively. 
Now, if $\frac{|S_i|}{k} > \frac{m}{2k}$, we have
    \begin{align*}
      \left|  \left| \chi^{-1}(h_1) \right| - \frac{|S_i|}{k} \right| = \frac{|S_i|}{k} - \left| \chi^{-1}(h_1) \right| > d.
    \end{align*}
Otherwise, if $\frac{|S_i|}{k} \leq \frac{m}{2k}$, we have
    \begin{align*}
      \left|  \left| \chi^{-1}(h_2) \right| - \frac{|S_i|}{k} \right| = \left| \chi^{-1}(h_2) \right| - \frac{|S_i|}{k} > d.
    \end{align*}
    Both cases imply a discrepancy higher than $d$ for coloring $\chi$.
\end{proof}

We now bound the probabilities of the events defined above. Specifically, we prove the following lemma.

\begin{lemma} \label{lemma:actualbound}
    For every $i \in [n-1]$, $h \in [k]$ it holds 
    \begin{align*}
        \Pr\left[\overline{\EE_{i,h_1}}\right] <\exp(-\exp(-48)).
    \end{align*}
\end{lemma}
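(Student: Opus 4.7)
The plan is to fix a coloring $\chi \in X$, a set index $i \in [n-1]$, and a color $h \in [k]$; I will handle the case $h \leq \lfloor k/2 \rfloor$ in detail, the other being symmetric via the upper-tail version of \cref{lem:reverse-Chernoff}. Once $\chi$ is fixed, the random variable $Y_h := |\chi^{-1}(h) \cap S_i|$ is a sum of $t := |\chi^{-1}(h)|$ independent Bernoulli$(1/2)$ indicators (one per element of $\chi^{-1}(h)$, indicating its membership in $S_i$), so $Y_h \sim \bin(t,1/2)$ with mean $t/2$. The assumption $\chi \in X$ forces $t \in [m/k - d,\, m/k + d]$.

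My next step is to compare the threshold $m/(2k) - d$ of the event $\EE_{i,h}$ to $\E[Y_h] = t/2$. A one-line calculation gives $t/2 - (m/(2k) - d) = (t - m/k)/2 + d \in [d/2,\, 3d/2]$, so $\EE_{i,h}$ is a lower-tail deviation that is at most $3d/2$ below the mean. I will then invoke \cref{lem:reverse-Chernoff} with $\eps := 3d/t$, so that $(t/2)(1-\eps) = t/2 - 3d/2 \leq m/(2k) - d$, placing $\{Y_h \leq t/2 - 3d/2\} \subseteq \EE_{i,h}$ (any boundary case from the strict inequality in the definition of $\EE_{i,h}$ can be absorbed by an infinitesimal increase of $\eps$, which leaves the numerical bounds below unchanged).

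Next, I verify the hypotheses of \cref{lem:reverse-Chernoff} using $d = \sqrt{m/k}$ and the assumption $n \geq 1 + 147\, e^{48} \ln k$, which ensures $m/k \geq 49$ and thus $\sqrt{m/k} \geq 7$. For $\eps \leq 1/2$: $3d/t \leq 3d/(m/k - d) = 3/(\sqrt{m/k} - 1) \leq 1/2$. For $\eps^2 t \geq 6$: using $d^2/t \geq (m/k)/(m/k + d) = 1/(1 + 1/\sqrt{m/k}) \geq 7/8$, I get $\eps^2 t = 9 d^2/t \geq 63/8 > 6$. The Chernoff exponent is then
\[
\tfrac{9}{2} \eps^2 t \;=\; \tfrac{81\, d^2}{2t} \;\leq\; \tfrac{81}{2} \cdot \tfrac{7}{6} \;=\; 47.25,
\]
using the symmetric upper bound $d^2/t \leq (m/k)/(m/k - d) = 1/(1 - 1/\sqrt{m/k}) \leq 7/6$. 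Hence $\Pr[\EE_{i,h}] \geq \exp(-47.25) > \exp(-48)$.

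Finally, I will convert this to the stated form via the elementary inequality $\exp(-y) \geq 1 - y$ applied at $y = \exp(-48)$, yielding $1 - \exp(-\exp(-48)) \leq \exp(-48) < \exp(-47.25) \leq \Pr[\EE_{i,h}]$, so $\Pr[\overline{\EE_{i,h}}] < \exp(-\exp(-48))$. The main obstacle is calibrating a single $\eps$ that simultaneously (i) satisfies $\eps \leq 1/2$ and $\eps^2 t \geq 6$ as required by \cref{lem:reverse-Chernoff}, (ii) pulls the threshold $(t/2)(1-\eps)$ down far enough to reach $m/(2k) - d$, and (iii) keeps $9 \eps^2 t / 2$ strictly below $48$. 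The choice $d = \sqrt{m/k}$ combined with $m/k \geq 49$ is precisely what confines $d^2/t$ into the narrow window $[7/8,\, 7/6]$ in which all three requirements are compatible; the thin margin available is what drives the exotic constants $e^{48}$ and $147$ in the hypotheses of \cref{thm:discr}.
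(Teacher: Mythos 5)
Your proof is correct and takes essentially the same route as the paper's: you identify $|\chi^{-1}(h)\cap S_i|$ as a $\bin(t,1/2)$ variable with $t=|\chi^{-1}(h)|\in[m/k-d,\,m/k+d]$, apply the reverse Chernoff bound (Lemma~\ref{lem:reverse-Chernoff}) with $\eps=3d/t$, verify its hypotheses from $d=\sqrt{m/k}\geq 7$, reach the identical exponent $\tfrac{81}{2}\cdot\tfrac{7}{6}=47.25<48$, and finish with $1-z\leq e^{-z}$. If anything, your explicit treatment of the strict-versus-weak inequality at the threshold is more careful than the paper's, which passes over that point silently.
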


\begin{proof}
    Notice that the random variable $\left| \chi^{-1}(h) \cup S_i \right|$ follows the binomial probability distribution $\bin(|\chi^{-1}(h)|, 1/2)$ with $|\chi^{-1}(h)|$ trials and success probability $1/2$ per trial.
    For $i \in [n-1]$ and $h \in \{1,2,...,\lfloor k/2 \rfloor\}$, we have 
    \begin{align}
        \nonumber
        \Pr[\EE_{i,h}]  &= \Pr\left[\left|\chi^{-1}(h) \cup S_i\right| < \frac{m}{2k}-d\right] \\\nonumber
        &\geq \Pr\left[\left|\chi^{-1}(h) \cup S_i\right| < \frac{\left|\chi^{-1}(h)\right|}{2} - \frac{3d}{2}\right] \\
        \label{equation:event1}
        &\geq \exp \left(-\frac{81}{2}\cdot \frac{d^2}{\left|\chi^{-1}(h)\right|} \right).
    \end{align}

    For $i \in [n-1]$ and $h \in \{\lfloor k/2 \rfloor+1,...,k\}$, we have 
    \begin{align}
        \nonumber
        \Pr[\EE_{i,h}] &= \Pr\left[\left|\chi^{-1}(h) \cup S_i\right| > \frac{m}{2k}+d\right] \\\nonumber
        &\geq \Pr\left[\left|\chi^{-1}(h) \cup S_i\right| > \frac{\left|\chi^{-1}(h)\right|}{2} - \frac{3d}{2}\right] \\
        \label{equation:event2}
        &\geq \exp \left(-\frac{81}{2} \cdot \frac{d^2}{\left|\chi^{-1}(h)\right|} \right).
    \end{align}
The first inequality in the derivations of (\ref{equation:event1}) and (\ref{equation:event2}) follow by the right and left inequality in (\ref{eq:disc-bounds}) respectively. The second inequality follows by applying the reverse Chernoff bound (Lemma \ref{lem:reverse-Chernoff}) with $t = |\chi^{-1}(h)|$ and $\eps = \frac{3d}{|\chi^{-1}(h)|}$. The next claim justifies that we can indeed do so.

\begin{claim}
For $\eps=\frac{3d}{|\chi^{-1}(h)|}$ and $t=|\chi^{-1}(h)|$, it holds that $\eps \leq 1/2$ and $\eps^2\cdot |\chi^{-1}(h)| \geq 6$.
\end{claim}

\begin{proof}
By the fact that $d\leq \frac{m}{7k}$ and the left inequality in (\ref{eq:disc-bounds}), we have $6d\leq \frac{m}{k}-d\leq |\chi^{-1}(h)|$, which implies that $\eps =\frac{3d}{|\chi^{-1}(h)|}\leq \frac{1}{2}$. Now, notice that the facts $d=\sqrt{\frac{m}{k}}$ and $d\leq \frac{m}{7k}$ imply that $d\leq \frac{d^2}{7}\leq \frac{d^2}{2}$. Using this observation and the right inequality in (\ref{eq:disc-bounds}), we get $|\chi^{-1}(h)| \leq \frac{m}{k}-d=d^2+d\leq \frac{3d^2}{2}$. Hence, $\eps^2 t=\frac{9d^2}{|\chi^{-1}(h)|}\geq 6$. The claim follows.
\end{proof}

Now, using the left inequality in (\ref{eq:disc-bounds}), equations (\ref{equation:event1}) and (\ref{equation:event2}) yield
\begin{align*}
    \Pr[\EE_{i,h}] &\geq \exp \left(- \frac{81}{2} \cdot \frac{d^2}{\frac{m}{k}-d}\right) \geq \exp \left(- \frac{81}{2} \cdot \frac{7d^2k}{6m} \right) 
    > \exp \left(-48\right).
\end{align*}
Finally, by the inequality $1-z \leq e^{-z}$ for every real $z$, we get
\begin{align*}
    \Pr \left[ \overline{\EE_{i,h}} \right] &= 1 - \Pr \left[ \EE_{i,h} \right]< 1 - \exp(-48) \leq \exp(-\exp(-48)),
\end{align*}
as desired.
\end{proof}

We are now ready to complete the proof of Theorem~\ref{thm:discr}. Denote by $\calD$ the event that coloring $\chi \in X$ has discrepancy at most $d$ for the random set system. By Lemma \ref{lem:badcoloring}, we have
\begin{align}
\nonumber
    \calD &\subseteq \overline{\bigvee_{i \in [n-1]} \left(  \left(\bigvee_{h_1 \in \{1,...,\lfloor k/2 \rfloor\}}  \EE_{i,h_1} \right) \wedge \left(\bigvee_{h_2 \in \{\lfloor k/2 \rfloor+1,...,k\}} \EE_{i,h_2} \right) \right)} \\\nonumber
    &= \bigwedge_{i \in [n-1]}  \left( \overline{  \left(\bigvee_{h_1 \in \{1,...,\lfloor k/2 \rfloor\}}  \EE_{i,h_1} \right) \wedge \left(\bigvee_{h_2 \in \{\lfloor k/2 \rfloor+1,...,k\}} \EE_{i,h_2}  \right)} \right) \\\nonumber
    &= \bigwedge_{i \in [n-1]} \left( \left(  \overline{\bigvee_{h_1 \in \{1,...,\lfloor k/2 \rfloor\}} \EE_{i,h_1}} \right) \vee \left( \overline{\bigvee_{h_2 \in \{\lfloor k/2 \rfloor+1,...,k\}} \EE_{i,h_2}} \right) \right) \\\label{equation:boundbadevent}
    &= \bigwedge_{i \in [n-1]} \left( \left(  \bigwedge_{h_1 \in \{1,...,\lfloor k/2 \rfloor\}} \overline{ \EE_{i,h_1}} \right) \vee \left( \bigwedge_{h_2 \in \{\lfloor k/2 \rfloor+1,...,k\}} \overline{ \EE_{i,h_2}} \right) \right)
\end{align}
Using equation (\ref{equation:boundbadevent}) and Lemma \ref{lem:badcoloring}, we obtain
\begin{align}\nonumber
    \Pr[\calD] &\leq \Pr \left[\bigwedge_{i \in [n-1]} \left( \left(  \bigwedge_{h_1 \in \{1,...,\lfloor k/2 \rfloor\}} \overline{ \EE_{i,h_1}} \right) \vee \left( \bigwedge_{h_2 \in \{\lfloor k/2 \rfloor+1,...,k\}} \overline{ \EE_{i,h_2}} \right) \right) \right] \\\nonumber
    &= \prod_{i \in [n-1]}  \Pr \left [ \left( \left(  \bigwedge_{h_1 \in \{1,...,\lfloor k/2 \rfloor\}} \overline{ \EE_{i,h_1}} \right) \vee \left( \bigwedge_{h_2 \in \{\lfloor k/2 \rfloor+1,...,k\}} \overline{ \EE_{i,h_2}} \right) \right) \right] \\\nonumber
    &\leq \prod_{i \in [n-1]} \left( \Pr \left[ \bigwedge_{h_1 \in \{1,...,\lfloor k/2 \rfloor\}} \overline{ \EE_{i,h_1}} \right] + \Pr \left[ \bigwedge_{h_2 \in \{\lfloor k/2 \rfloor+1,...,k\}} \overline{ \EE_{i,h_2}}  \right] \right) \\\label{eq:calD}
    &= \prod_{i \in [n-1]} \left( \prod_{h_1 \in \{1,...,\lfloor k/2 \rfloor\}} \Pr\left[\overline{\EE_{i,h_1}}\right] + \prod_{h_2 \in \{\lfloor k/2 \rfloor+1,...,k\}} \Pr\left[\overline{\EE_{i,h_2}}\right]
    \right).
\end{align}
The second inequality follows by applying the union bound. The last equality follows since for a given $i\in [n-1]$, the events in $\left\{\EE_{i,h}: h\in \{1, ..., \lfloor k/2\rfloor\}\right\}$ (and, respectively, the events $\left\{\EE_{i,h}: h\in \left\{\lfloor k/2\rfloor+1, ..., k\right\}\right\}$) are mutually independent.
Using the upper bound for $\Pr\left[\overline{\EE_{i,h}}\right]$ from Lemma~\ref{lemma:actualbound}, we finally get 
\begin{align*}
    \Pr[\calD] &< \left(2 \cdot\exp\left(-\frac{k-1}{2} \cdot \exp(-48)\right)\right)^{n-1} \\
    &\leq \exp \left( (n-1) \cdot \left( -\frac{k-1}{2} \cdot \exp(-48) + 1 \right) \right) \\
    &\leq \exp \left( -(n-1) \cdot \frac{k}{3} \cdot \exp(-48) \right) \\
    &\leq \exp(-m \ln{k})=k^{-m},
\end{align*}
as desired. The first inequality follows by equation (\ref{eq:calD}) after observing that each of the sets $\left\{1, ..., \lfloor k/2\rfloor\right\}$ and $\left\{\lfloor k/2\rfloor+1, ..., k\right\}$ have at least $\frac{k-1}{2}$ elements. The second inequality is obvious, the third one follows by the condition $k\geq 3+6e^{48}$, and the fourth one by the definition of $m$.
\end{proof}

Using a result of~\citet[Theorem~3.1]{MS22} which lower-bounds \cd$(n,k)$ by \disc$(n,k)$, we obtain the following corollary.

\begin{theorem}\label{thm:cd}
Let $k\geq 3+6e^{48}$ and $n\geq 1+147\cdot e^{48} \ln{k}$ be integers. Then, there is a set of $n$ agents with valuations for a set of items so that no $k$-allocation of the items is \emph{\cd}$d$ for some $d\in \Omega\left(\sqrt{\frac{n}{\ln{k}}}\right)$.
\end{theorem}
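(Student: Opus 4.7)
The plan is to derive Theorem~\ref{thm:cd} as an essentially immediate consequence of Theorem~\ref{thm:discr}, combined with the reduction of \citet{MS22} which shows $\cd(n,k)\geq \disc(n,k)$. Rather than invoke that reduction as a black box, I would spell out the short argument so the paper is self-contained at this point.

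First, I would recall the canonical construction linking set systems and binary-valuation fair division instances. Given any set system $(U,\collection)$ with $n$ subsets $S_1,\ldots,S_n$, build a fair division instance whose items are the elements of $U$ and whose $n$ agents have binary valuations $v_i(j)=\one[j\in S_i]$. Every $k$-allocation $A=(A_1,\ldots,A_k)$ induces a $k$-coloring $\chi$ of $U$ by $\chi(j)=h\Leftrightarrow j\in A_h$, and, conversely, every coloring arises from exactly one such allocation.

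Second, I would verify that a \cd$d$ allocation in this binary-valuation instance forces the induced coloring to have discrepancy at most $d$. Since the valuations are binary, removing up to $d$ items from any bundle can decrease its value for any agent by at most $d$, so the \cd$d$ condition specializes to
\[
|S_i\cap A_h|\;\geq\;|S_i\cap A_\ell|-d\qquad\text{for every } i\in[n] \text{ and every } h,\ell\in[k].
\]
Summing this inequality over $\ell\in[k]$ gives $k\,|S_i\cap A_h|\geq |S_i|-kd$, i.e., $|S_i\cap A_h|\geq |S_i|/k-d$; the symmetric upper bound follows by swapping the roles of $h$ and $\ell$. Hence the discrepancy of $\chi$ is at most $d$.

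Third, I would close the argument by contraposition. Applying Theorem~\ref{thm:discr} to the given $n$ and $k$ yields a set system with $n$ subsets whose every $k$-coloring has discrepancy at least some $d^\star\in\Omega\bigl(\sqrt{n/\ln k}\bigr)$. The corresponding binary-valuation instance on $n$ agents can admit no \cd$d$ $k$-allocation for any $d<d^\star$, since by the previous step such an allocation would produce a $k$-coloring of that set system with discrepancy at most $d<d^\star$, contradicting Theorem~\ref{thm:discr}. I do not expect any real obstacle: the heavy lifting is already in Theorem~\ref{thm:discr}, and the \cd$d$-to-discrepancy translation is lossless up to the additive constant $d$ for binary valuations. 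The only minor care needed is to keep the parameter regimes $k\geq 3+6e^{48}$ and $n\geq 1+147\cdot e^{48}\ln k$ consistent with those assumed in Theorem~\ref{thm:discr}, which is automatic because exactly the same $n$ and $k$ are used in the reduction.
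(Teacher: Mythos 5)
Your proposal is correct and follows essentially the same route as the paper, which obtains Theorem~\ref{thm:cd} as an immediate corollary of Theorem~\ref{thm:discr} by invoking the result of \citet[Theorem~3.1]{MS22} that $\cd(n,k)\geq \disc(n,k)$. The only difference is that you unpack that black-box reduction explicitly (binary valuations $v_i(j)=\one[j\in S_i]$, the observation that for such valuations \cd$d$ is equivalent to $|S_i\cap A_h|\geq |S_i\cap A_\ell|-d$ for all $i,h,\ell$, and the summing-over-$\ell$ argument giving discrepancy at most $d$), and your unpacked argument is accurate, including the consistency of the parameter regime with Theorem~\ref{thm:discr}.
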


\section{Lower bounds for envy-freeness and proportionality}\label{sec:ef-prop}

In this section, we prove our lower bounds for $\ef(n_1, ..., n_k)$ (Theorem~\ref{thm:ef}) and $\prop(n_1, ..., n_k)$ (Theorem~\ref{thm:prop} and Theorem~\ref{thm:propnew}).

\begin{theorem}\label{thm:ef}
    Let $k\geq 2$ be an integer and $n\geq k+242\cdot e^{124}k\ln{k}$. Then, for every set of $n$ agents, partitioned into $k$ non-empty groups, there is a set of items and valuations of the agents for these items so that no allocation is \emph{\ef}$d$ for some $d\in \Omega\left(\sqrt{\frac{n}{k\ln{k}}}\right)$.
\end{theorem}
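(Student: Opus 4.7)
The plan is to adapt the probabilistic construction of Theorem~\ref{thm:discr} to the group-allocation setting. Consider a random instance over $m=\Theta((n-k)/\ln k)$ items. For every group $h\in[k]$, designate one of its agents as a \emph{reference} agent with valuation $1$ for every item; this is possible because each group is non-empty. The remaining $n-k$ agents receive independent uniform binary valuations: $v_i(j)\in\{0,1\}$ uniformly and independently for every pair $(i,j)$. The reference agents play the role that $S_0$ played in the discrepancy proof: if a candidate allocation $A=(A_1,\ldots,A_k)$ is \ef$d$, then \ef$d$ at the reference agent of group $h$ enforces $|A_{h'}|\leq|A_h|+d$ for every $h'$, and combining these constraints across all groups yields $|A_h|\in[m/k-d,m/k+d]$ for every $h\in[k]$.

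For any such balanced allocation and any random agent $i$ in group $h_1=g(i)$, I bound the probability of low envy via a two-sided anti-concentration argument akin to the \cd$d$ proof. Let $\mathcal{E}^-_i=\{v_i(A_{h_1})\leq|A_{h_1}|/2-2d\}$ and, for each $h_2\neq h_1$, $\mathcal{E}^+_{i,h_2}=\{v_i(A_{h_2})\geq|A_{h_2}|/2+2d\}$. Whenever $\mathcal{E}^-_i$ and some $\mathcal{E}^+_{i,h_2}$ jointly occur, the envy of agent $i$ toward group $h_2$ is at least
\[
\frac{|A_{h_2}|-|A_{h_1}|}{2}+4d\;\geq\;3d\;>\;d,
\]
using the balance $|A_{h_2}|-|A_{h_1}|\geq -2d$. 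The reverse Chernoff bound (Lemma~\ref{lem:reverse-Chernoff}) yields $\Pr[\mathcal{E}^-_i]\geq p$ and $\Pr[\mathcal{E}^+_{i,h_2}]\geq p$ with $p=\exp(-\Theta(d^2 k/m))$, and these $k$ events depend on $v_i$ restricted to pairwise disjoint bundles, so they are mutually independent. Hence
\[
\Pr[\text{agent $i$ envies some group by }>d]\;\geq\; p\bigl(1-(1-p)^{k-1}\bigr).
\]

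Choosing $d=\Theta(\sqrt{n/(k\ln k)})$ together with $m=\Theta((n-k)/\ln k)$ calibrates $p$ to a constant bounded away from $0$, while ensuring $p(k-1)\geq 1$, so the per-agent envy probability is at least $p/2$. Independence of valuations across the $n-k$ random agents then gives $\Pr[\text{no random agent envies by }>d]\leq(1-p/2)^{n-k}\leq\exp(-p(n-k)/2)$, and a union bound over the at most $k^m$ allocations yields
\[
\Pr[\text{some \ef}d\text{ allocation exists}]\;\leq\;k^m\exp(-p(n-k)/2)\;<\;1
\]
under the stated lower bound on $n$. Consequently, there exists an instance in the support admitting no \ef$d$ allocation.

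The main obstacle is the parameter calibration in the final union bound: $d$ must be small enough to keep the anti-concentration probability $p$ a constant, yet large enough to produce the claimed $\Omega(\sqrt{n/(k\ln k)})$ lower bound. The margin $2d$ in the $\mathcal{E}^\pm$ events (instead of $d$) is what guarantees envy strictly exceeds $d$ despite the residual $\pm d$ imbalance left by the reference agents; this doubling multiplies the exponent in the reverse Chernoff estimate and is the source of the $e^{124}$ constant in the hypothesis.
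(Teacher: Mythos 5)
Your construction and overall strategy coincide with the paper's: one all-ones ``leader'' per group, which pins every candidate EF$d$ allocation to the balanced family $|A_h|\in[m/k-d,\,m/k+d]$; independent fair-coin valuations for the remaining $n-k$ agents; a constant per-agent probability of envy exceeding $d$ obtained from the reverse Chernoff bound; independence across agents; and a union bound against the at most $k^m$ allocations with $m=\Theta\left((n-k)/\ln k\right)$ and $d=\Theta\left(\sqrt{m/k}\right)$. The one structural difference is your per-agent event: you demand a $2d$-deficit on the own bundle \emph{and} a $2d$-excess on some of the $k-1$ other bundles (both tails via reverse Chernoff, boosted over all other groups), whereas the paper fixes a single other group $h^*$, pays the full deviation $5d/2$ on the own bundle, and only needs $v_i(A_{h^*})\geq |A_{h^*}|/2$, which holds with probability at least $1/2$ by symmetry of $\bin(|A_{h^*}|,1/2)$. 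The paper's version is simpler and, crucially, its per-agent probability bound has no dependence on $k$.

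That dependence is exactly where your write-up breaks. You cannot ``calibrate $p$ to a constant bounded away from $0$ while ensuring $p(k-1)\geq 1$'': for $k=2$ this demands $p\geq 1$, and since $p=\exp\left(-\Theta(d^2k/m)\right)$ is a small fixed constant once $d=\Theta\left(\sqrt{m/k}\right)$ (it must be well below $1$, indeed below $1/2$ for an upper-tail event), the condition $p(k-1)\geq 1$ fails for every $k$ below $1+1/p$ --- i.e., in the entire regime of moderate $k$ that the theorem covers, starting with $k=2$. So the step ``per-agent envy probability is at least $p/2$'' is unjustified as written. The repair is one line: $1-(1-p)^{k-1}\geq p$ for all $k\geq 2$, so the per-agent envy probability is at least $p^2$, still a constant, and the final bound $k^m\exp\left(-p^2(n-k)\right)<1$ goes through after enlarging the constant in the definition of $m$ (equivalently, adopt the paper's move: drop the upper-tail events altogether, use the single event $v_i(A_{h^*})\geq |A_{h^*}|/2$ with probability at least $1/2$, and shift the full margin to the own-bundle deficit). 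With that patch --- and an explicit check of the reverse Chernoff preconditions $\eps\leq 1/2$ and $\eps^2 t\geq 6$, which the paper verifies in a separate claim and your sketch only gestures at --- your argument is correct and is essentially the paper's proof.
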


\begin{theorem}\label{thm:prop}
    Let $k\geq 2$ be an integer and $n\geq k+162\cdot e^{77}k\ln{k}$. Then, for every set of $n$ agents, partitioned into $k$ non-empty groups, there is a set of items and valuations of the agents for these items so that no allocation is \emph{\prop}$d$ for some $d\in\Omega\left( \sqrt{\frac{n}{k^3 \ln{k}}}  \right)$.
\end{theorem}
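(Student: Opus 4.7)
The plan is to adapt the probabilistic construction of Theorem~\ref{thm:discr} to the group fair-division setting, using \emph{constraint agents} (i.e., agents valuing every item) in place of the universe set $S_0$. Given the partition $n_1,\ldots,n_k$ with all $n_h\geq 1$, I would designate one agent per group as a constraint agent and give each of the remaining $n-k$ agents a random binary valuation in which each of $m$ items lies in her demand set $S_i$ independently with probability $1/2$. I would choose $m = \Theta((n-k)/\ln k)$ and $d = \Theta(\sqrt{n/(k^3\ln k)})$.

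The role of the constraint agents is to force near-balance of bundle sizes. The constraint agent in group $h$ has $v_i(A_h) = |A_h|$ and $v_i([m])/k = m/k$, so $\prop d$ for her forces $|A_h|\geq m/k - d$; combined with $\sum_h |A_h| = m$, this forces $|A_h|\in [m/k - d,\, m/k + (k-1)d]$ in every $\prop d$ allocation. Hence it suffices to union-bound over the at most $k^m$ allocations whose bundle sizes lie in this admissible range.

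The heart of the argument is to lower bound the $\prop d$-failure probability for a single random agent. For agent $i$ in group $h$, write $X = |A_h\cap S_i|\sim\bin(|A_h|,1/2)$ and $Y = |([m]\setminus A_h)\cap S_i|\sim\bin(m-|A_h|,1/2)$; these are independent, and the $\prop d$ condition rearranges to $(k-1)X + kd \geq Y$. I would pick thresholds $\alpha,\beta\geq 0$ such that the joint event $\{X\leq |A_h|/2 - \alpha\}\cap\{Y\geq (m-|A_h|)/2 + \beta\}$ implies $\prop d$-failure, which is the case whenever $\beta + (k-1)\alpha \geq kd + k\delta_h/2$ with $\delta_h := |A_h| - m/k$. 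Two independent applications of Lemma~\ref{lem:reverse-Chernoff} to $X$ and $Y$ lower bound this joint probability by $\exp(-C(k\alpha^2 + \beta^2)/m)$ for an absolute constant $C$, and a Lagrange-multiplier optimization of $k\alpha^2 + \beta^2$ under the above linear constraint yields the value $\Theta((kd + k\delta_h/2)^2/k)$. Crucially, $\prop d$ is one-sided: it only penalizes too-small bundles, so the adversary is free to push $\delta_h$ all the way up to $(k-1)d$, which maximizes the exponent and gives a worst-case per-agent failure probability of at least $\exp(-Ck^3 d^2/m)$.

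Combining via independence of the $S_i$'s and the union bound, the proof reduces to
\[
k^m \cdot \bigl(1 - \exp(-Ck^3 d^2/m)\bigr)^{n-k} < 1,
\]
equivalently $(n-k)\exp(-Ck^3 d^2/m) > m\ln k$. Setting $m = (n-k)/(e\ln k)$ optimizes the right-hand side and yields $d \in \Omega(\sqrt{n/(k^3\ln k)})$, matching the claim. The main obstacle I anticipate is quantitative: one has to verify the hypotheses $\eps\leq 1/2$ and $\eps^2 t \geq 6$ of Lemma~\ref{lem:reverse-Chernoff} throughout the full admissible range of bundle sizes (which is ultimately what forces the large constants in the assumption $n \geq k + 162\cdot e^{77} k\ln k$), and to justify carefully that the extra factor of $k^2$ in the exponent---relative to the bound in Theorem~\ref{thm:discr}---is genuinely unavoidable, being the precise cost of the one-sidedness of $\prop d$, which lets the adversary concentrate $|A_h|$ near the upper endpoint $m/k + (k-1)d$ of the admissible range rather than being forced into two-sided balance.
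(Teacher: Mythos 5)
Your proposal follows essentially the same route as the paper's proof: the same random binary instance with one all-ones leader per group (your ``constraint agents''), the same use of those leaders to confine attention to allocations with $|A_h|\in[m/k-d,\,m/k+(k-1)d]$, the same per-agent failure event pushing $X=v_i(A_{g(i)})$ below its mean while $\sum_{h\neq g(i)}v_i(A_h)$ stays above its mean, and the same union bound over at most $k^m$ allocations with $m=\Theta(n/\ln k)$ and $d=\sqrt{m/k^3}$. Your diagnosis that the one-sidedness of PROP$d$ lets the adversary push $\delta_{g(i)}$ up to $(k-1)d$, forcing a deviation of order $kd$ on the own-bundle side and an exponent of order $k^3d^2/m$, is exactly the mechanism in the paper (whose choice $\alpha=2kd$ gives $(k-1)\alpha-\tfrac{k(k-1)}{2}d=\tfrac{3k(k-1)}{2}d>kd$).

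One step as written would fail, though it is easily repaired. At your Lagrange-optimal thresholds you get $\beta=\Theta(T/k)=\Theta(kd)$, and applying Lemma~\ref{lem:reverse-Chernoff} to $Y\sim\mathcal{B}(m-|A_{g(i)}|,1/2)$ with $\eps=2\beta/(m-|A_{g(i)}|)$ gives $\eps^2t=4\beta^2/(m-|A_{g(i)}|)=\Theta(k^2d^2/m)=\Theta(1/k)$, since $d=\sqrt{m/k^3}$; so the hypothesis $\eps^2t\geq 6$ of the reverse Chernoff bound fails for \emph{every} $k$, not merely at the edge of the admissible range of bundle sizes, and no constant-factor adjustment of $\beta$ at the $\Theta(kd)$ scale rescues it. The paper sidesteps this by taking, in effect, $\beta=0$: its event $\mathcal{E}_i^2(A)$ only asks that the valuation of the complement bundle be at least its mean, which holds with probability at least $1/2$ by symmetry of the binomial, with no anti-concentration lemma needed; the entire deviation budget is then charged to the $X$ side, where the hypotheses of Lemma~\ref{lem:reverse-Chernoff} do hold. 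Since the dominant term in your optimized cost $k\alpha^2+\beta^2$ is already $k\alpha^2$, this degenerate choice loses only constant factors and yields the same $d\in\Omega\bigl(\sqrt{n/(k^3\ln k)}\bigr)$, so with that fix your argument goes through and coincides with the paper's.
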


\begin{theorem} \label{thm:propnew}
For every set of agents partitioned into $k\geq 4$ groups so that the number of agents in each group is at least $1+32e^{96}k^2\ln{k}$, there is a set of items and valuations of the agents for these items so that no allocation is \emph{\prop}$d$ for some $d\in\Omega\left( \sqrt{\frac{\min(n_1,n_2,\dots,n_k)}{\ln{k}}} \right)$.
\end{theorem}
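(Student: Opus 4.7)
I adapt the probabilistic construction of Theorem~\ref{thm:discr} to $\prop$. Let $n_{\min}=\min(n_1,\ldots,n_k)$, set $m=\lfloor k(n_{\min}-1)/(32e^{96}\ln k)\rfloor$ and $d=\lfloor\sqrt{m/k}\rfloor$; the hypothesis on each $n_h$ then gives $d\in\Omega\bigl(\sqrt{n_{\min}/\ln k}\bigr)$. Independently for each agent $i\in[n]$, draw $S_i\subseteq[m]$ by including each item with probability $1/2$ and set $v_i(j)=1$ if $j\in S_i$ and $v_i(j)=0$ otherwise. Fix an allocation $A=(A_1,\ldots,A_k)$ with $a_h=|A_h|$. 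Since the $S_i$ are i.i.d., the probability that agent $i$ in group $h$ violates $\prop d$ (i.e.\ $|S_i\cap A_h|<|S_i|/k-d$) depends only on $a_h$, so denote it $q(a_h)$. By independence of the $S_i$ across agents, $\Pr[A\text{ is \prop }d]=\prod_{h\in[k]}(1-q(a_h))^{n_h}$.

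Mirroring the discrepancy argument, I lower-bound $q(a_h)$ via two independent anti-concentration events on disjoint item sets. With $D_h=d+\max\{0,(a_h-m/k)/2\}+1$, let $\EE_i^{\mathrm{low}}=\{|S_i\cap A_h|\le a_h/2-D_h\}$ and $\EE_i^{\mathrm{high}}=\{|S_i\setminus A_h|\ge(m-a_h)/2+D_h\}$. These are independent since they depend on disjoint items. Writing $|S_i\cap A_h|-|S_i|/k=((k-1)|S_i\cap A_h|-|S_i\setminus A_h|)/k$ and substituting the two bounds, one checks that $\EE_i^{\mathrm{low}}\wedge\EE_i^{\mathrm{high}}$ implies the $\prop d$ violation. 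Applying Lemma~\ref{lem:reverse-Chernoff} to $\mathrm{Bin}(a_h,1/2)$ and $\mathrm{Bin}(m-a_h,1/2)$ with the appropriate $\eps$, each event has probability at least $\exp(-48)$ whenever $a_h$ is within a constant multiple of $d$ of $m/k$, so $q(a_h)\ge\exp(-96)$ there — this is where the constant $e^{96}$ in the hypothesis originates.

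The crux is then to show that the adversary's best allocation is balanced. I would prove that $r(a):=-\log(1-q(a))$ is convex in $a$ on the range relevant to the argument, so that $\prod_h(1-q(a_h))^{n_h}$ (equivalently, $\exp(-\sum_h n_h r(a_h))$) is Schur-concave in $(a_h)$ and hence maximized at $a_h=m/k$. For convexity, I use the coupling $Z(a+1)=Z(a)-kX_{a+1}$ with $Z(a)=|S_i\setminus A_h|-(k-1)|S_i\cap A_h|$ and $X_{a+1}\sim\mathrm{Bernoulli}(1/2)$ independent of $Z(a)-X_{a+1}$, which expresses the discrete second difference of $q$ as $\tfrac14\bigl(\Pr[Z(a-1)>kd]+\Pr[Z(a-1)>kd+2k]-2\Pr[Z(a-1)>kd+k]\bigr)$; this is non-negative when the survival function of $Z(a-1)$ is convex at $kd+k$, a consequence of the unimodality of its density and the fact that $kd+k$ lies above the mode in the relevant regime. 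Convexity of $r$ then follows from $r''=q''/(1-q)+(q'/(1-q))^2$, in which the quadratic term absorbs any residual concavity of $q$ where $q(a)\approx 1$. Groups with $a_h\le m/k-2d$, where $q(a_h)\ge 1/2$ directly and $q$ may be concave, are split off and contribute $r(a_h)\ge\log 2$ per group; Jensen is applied on the remaining groups subject to the residual sum constraint.

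Combining, $\Pr[A\text{ is \prop }d]\le(1-q(m/k))^n\le\exp(-np_0)$ with $p_0\ge\exp(-96)$. Since $n\ge kn_{\min}$, union-bounding over the $k^m$ allocations yields
\[
\Pr[\exists A\text{ is \prop }d]\;\le\;k^m\exp(-k n_{\min}p_0)\;<\;1
\]
whenever $m\ln k<kn_{\min}p_0$, which is precisely what our choice $m=\Theta(kn_{\min}/\ln k)$ ensures; the probabilistic argument then delivers an instance in which no allocation is $\prop d$. The main obstacle I anticipate is the rigorous verification of the convexity of $r$: $q$ transitions from a concave regime at small $a$ (where $q\approx 1$) to a convex regime around and above $m/k$, so the Jensen step requires the splitting workaround described above to handle both regimes uniformly, together with a separate check that unequal group sizes $n_h\ge n_{\min}$ do not weaken the Schur-concavity conclusion.
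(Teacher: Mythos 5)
There is a genuine gap, and it sits exactly at the step you flag yourself: the convexity of $r(a)=-\log(1-q(a))$, which carries the entire argument for unbalanced allocations, is not proved and would be genuinely hard to prove along the lines you sketch. Your coupling reduces the discrete second difference of $q$ to convexity of the survival function of $W=|S_i\setminus A_h|-(k-1)|S_i\cap A_h|$ at spacing $k$ above its mode, but $W$ is a sum of lattice variables with mixed step sizes ($+1$ and $-(k-1)$), and unimodality/log-concavity of such mixed-lattice convolutions is not an off-the-shelf fact (strong unimodality on $\mathbb{Z}$ is not preserved under convolution with a variable supported on $(k-1)\mathbb{Z}$); locating the mode relative to $kd$ uniformly over the relevant range of $a$ is a further unaddressed task. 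The surrounding scaffolding also leaks: the claim that both anti-concentration events have probability at least $\exp(-48)$ ``whenever $a_h$ is within a constant multiple of $d$ of $m/k$'' is quantitatively false, since with $D_h\approx d+(a_h-m/k)/2$ the reverse-Chernoff exponent degrades \emph{quadratically} in that multiple (already at $a_h=m/k+6d$ you get roughly $\exp(-288)$ per event). So allocations with $a_h=m/k+td$ for moderately growing $t$ fall into neither your ``near-balanced'' regime nor your split-off regime $a_h\leq m/k-2d$, and are covered only by the unproven convexity machinery; likewise, convexity of $r$ alone does not make the balanced allocation worst-case for the \emph{weighted} objective $\sum_h n_h r(a_h)$ with unequal $n_h$, which you acknowledge but do not resolve.

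The paper's proof avoids all of this with two devices you dropped. First, it plants a \emph{leader} in each group whose valuation is $1$ for every item: any $\prop d$ allocation must then satisfy $|A_h|\geq m/k-d$ deterministically, so the only imbalance left is the excess, parametrized as $|A_h|=m/k+(\zeta_h-1)d$ with $\zeta_h\geq 0$ and $\sum_{h\in[k]}\zeta_h=k$. Second, instead of proving convexity of the true probability function $q$, it proves an explicit $\zeta$-parametrized lower bound: each follower in group $h$ violates $\prop d$ with probability at least $\tfrac12\exp\left(-6(3+\zeta_h)^2d^2k/m\right)$, obtained from reverse Chernoff with $\eps=(3+\zeta_h)d/|A_h|$ together with a second event on the complement bundles. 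The balancing step is then Jensen's inequality applied to the manifestly convex function $z\mapsto\exp(-c(3+z)^2)$ on $[0,\infty)$ with the linear constraint $\sum_h\zeta_h=k$ --- two elementary lines replacing your second-difference/unimodality program. Unequal group sizes are handled by simply using only $\ell=\min_h n_h-1$ followers per group, which sidesteps the weighted Schur-concavity issue entirely. Your skeleton (random binary valuations, two independent reverse-Chernoff events per agent, union bound over $k^m$ allocations) matches the paper's, so the fix is local: add the leaders to restrict to $|A_h|\geq m/k-d$, and replace convexity of $r$ by Jensen on an explicit convex surrogate bound.
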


We prove Theorems~\ref{thm:ef}, ~\ref{thm:prop} and ~\ref{thm:propnew} using a probabilistic argument that adapts the one we used for multi-color discrepancy in Section~\ref{sec:disc}. For appropriate parameters $m$ and $d$, we will define a valuation profile in which the agent valuations for $m$ items are random and will show that the probability that there exists an allocation that is envy-free/proportional up to $d$ items is strictly smaller than $1$. This implies that there exists an instance that does not admit an allocation that is envy-free/proportional up to $d$ items. 

For each group $h\in [k]$, we select a specific agent to be the {\em leader} of the group. We denote by $L$ the set of agents who are group leaders and by $F$ the remaining agents. The instance has a set $M$ of $m$ items. For each item $j \in M$, the group leader $i\in L$ has valuation $v_i(j)=1$ for the item. The valuation $v_i(j)$ of an agent $i\in F$ for item $j\in M$ is decided by tossing a fair coin; it is equal to $1$ on heads (this happens with probability $1/2$) and equal to $0$ on tails. All coin tosses are independent.

\subsection{Proof of Theorem~\ref{thm:ef}}
We prove Theorem \ref{thm:ef} using the above construction with $m=\left\lfloor \frac{n-k}{2e^{124}\ln{k}}\right\rfloor$ and set $d=\sqrt{\frac{m}{k}}$. Together with the restriction $n\geq k+242\cdot e^{124}k\ln{k}$, these definitions imply that $d\in \Omega\left(\sqrt{\frac{n}{k\ln{k}}}\right)$. Furthermore, notice that $\sqrt{m/k}\geq 11$ and $m\geq 11kd$; these observations will be useful later.

Denote by $\A$ the set of allocations $A=(A_1, A_2, ..., A_k)$ of the items in $M$ to $k$ bundles so that
\begin{align}\label{eq:ef-bundle-size-bounds}
    \frac{m}{k}-d &\leq |A_h| \leq \frac{m}{k}+d
\end{align}
for every $h\in [k]$. Clearly, an allocation that does not belong to $\A$ is not envy-free up to $d$; the condition for \ef$d$ would be violated for some agent in $L$. We will show that the random valuation profile is envy-free up to $d$ for an allocation $A\in \A$ is less than $k^{-m}$. Since there are at most $k^m$ allocations in $\A$, the probability that some of them is envy-free up to $d$ will be strictly less than $1$, completing the proof of Theorem~\ref{thm:ef}.

For an agent $i\in F$ and allocation $A\in \A$, we denote by $\EE_i(A)$ the event defined as
\begin{align}
    v_i(A_{g(i)}) &\geq v_i(A_h)-d
\end{align}
for every $h\in [k]$. The following lemma provides the main argument in our proof.

\begin{lemma}\label{lem:ef-bound}
    Consider an allocation $A\in \A$ and agent $i\in F$. Then,
    \begin{align*}
        \Pr[\EE_i(A)] &< \exp\left(-\frac{1}{2}\cdot \exp\left(-124\right)\right).
    \end{align*}
\end{lemma}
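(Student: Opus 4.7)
The plan is to lower bound $\Pr[\overline{\EE_i(A)}]$ by $\tfrac{1}{2}\exp(-124)$; the lemma then follows from $\Pr[\EE_i(A)] \le 1 - \tfrac{1}{2}\exp(-124) \le \exp\!\bigl(-\tfrac{1}{2}\exp(-124)\bigr)$ via $1-x \le e^{-x}$. The idea closely mirrors the discrepancy proof of Theorem~\ref{thm:discr}: first identify a ``bad'' combination of low/high valuations that forces envy to exceed $d$, and then lower bound its probability using independence across bundles together with Lemma~\ref{lem:reverse-Chernoff}.

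Concretely, I would define $P := \{v_i(A_{g(i)}) \le m/(2k) - d\}$ and, for each $h \ne g(i)$, $Q_h := \{v_i(A_h) \ge m/(2k) + d\}$. Whenever $P$ and some $Q_h$ hold simultaneously, $v_i(A_h) - v_i(A_{g(i)}) \ge 2d > d$, so agent~$i$ envies group~$h$ by more than $d$ and $\overline{\EE_i(A)}$ holds. Agent~$i$'s valuations for items in different bundles are independent fair coin flips, so $v_i(A_{g(i)})$ is independent of the family $(v_i(A_h))_{h \ne g(i)}$, which are themselves mutually independent; each $v_i(A_h)$ has distribution $\bin(|A_h|,1/2)$ with $|A_h| \in [m/k - d,\, m/k + d]$ from the definition of~$\A$. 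Applying Lemma~\ref{lem:reverse-Chernoff} with $\eps = 3d/|A_h|$ to each of $P$ and $Q_h$, exactly as in the proof of Theorem~\ref{thm:discr}, yields $\Pr[P], \Pr[Q_h] \ge \exp(-c)$ for an explicit constant~$c$. The assumptions $\sqrt{m/k} \ge 11$ and $m \ge 11 kd$ (forced by the lower bound on $n$) are precisely what is needed to verify the two hypotheses $\eps \le 1/2$ and $\eps^2|A_h| \ge 6$ of the reverse Chernoff bound.

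Combining via the noted independence and the standard inequality $1-\prod_h(1-\Pr[Q_h]) \ge 1 - \exp(-(k-1)\exp(-c))$ yields
\[
\Pr[\overline{\EE_i(A)}] \;\ge\; \Pr[P]\cdot \Pr\!\Bigl[\bigcup_{h \ne g(i)} Q_h\Bigr] \;\ge\; \exp(-c) \cdot \min\!\Bigl(\tfrac12,\; \tfrac{k-1}{2}\exp(-c)\Bigr),
\]
using $1-e^{-y} \ge y/2$ for $y \in [0,\ln 2]$ and $1-e^{-y} \ge 1/2$ for $y \ge \ln 2$. In both regimes the right-hand side comfortably exceeds $\tfrac{1}{2}\exp(-124)$ (since $c$ is only a few dozen). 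The main obstacle is the careful bookkeeping around Lemma~\ref{lem:reverse-Chernoff}: choosing the slack on each side of $m/(2k)$ generously enough (namely $d$ on each side, rather than the tighter $d/2$) so that $\eps^2|A_h|\ge 6$ holds, and then checking that the loose constant $124$ absorbs both the reverse Chernoff exponent and the union-vs-single-term analysis over the two regimes of $k$.
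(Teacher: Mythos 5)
Your proof is correct, and it reaches the lemma by a genuinely different decomposition than the paper's. The paper fixes a \emph{single} alternative bundle $h^*\in[k]\setminus\{g(i)\}$ and uses the pair of events $\EE_i^1(A)$: $v_i(A_{g(i)})\leq \frac{|A_{h^*}|}{2}-\frac{3d}{2}$ and $\EE_i^2(A)$: $v_i(A_{h^*})\geq \frac{|A_{h^*}|}{2}$; the second merely asks a $\bin(|A_{h^*}|,1/2)$ variable to reach its mean, which has probability at least $1/2$ by symmetry, so Lemma~\ref{lem:reverse-Chernoff} is invoked only once, on the low side, with the entire deviation pushed there ($\eps=5d/|A_{g(i)}|$, since (\ref{eq:ef-bundle-size-bounds}) gives $|A_{h^*}|\geq|A_{g(i)}|-2d$), and that single application is exactly where the constant $124$ comes from, via $\frac{225}{2}\cdot\frac{\sqrt{m/k}}{\sqrt{m/k}-1}\leq\frac{225\cdot 11}{20}=123.75<124$. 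You instead transplant the two-sided machinery of Theorem~\ref{thm:discr}: reverse Chernoff on both $P$ and every $Q_h$ with $\eps=3d/|A_h|$, followed by an independent union over the $k-1$ other bundles. Your deferred bookkeeping does close: the hypotheses hold since $\eps\leq 3d/(m/k-d)\leq 3/10$ and $\eps^2|A_h|\geq\frac{9d^2}{d^2+d}=\frac{9d}{d+1}\geq\frac{99}{12}>6$ (using $m\geq 11kd$ and $d=\sqrt{m/k}\geq 11$), and the resulting constant is $c\leq\frac{81}{2}\cdot\frac{11}{10}<45$, so even in your worst regime ($k=2$, where your bound degrades to $\frac{1}{2}e^{-2c}\geq\frac{1}{2}e^{-90}$) you clear $\frac{1}{2}e^{-124}$, and the final step $1-x\leq e^{-x}$ then delivers the lemma with its stated constant. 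Comparing what each route buys: the paper's is leaner --- one reverse-Chernoff application, one hypothesis check, no case split on $(k-1)e^{-c}$ versus $\ln 2$ --- and directly produces a $k$-free constant matching the lemma statement; yours pays a squared exponential when $k$ is small, but the $\frac{3d}{2}$-per-side split actually yields a better effective exponent ($\approx 90$ versus $124$), and for large $k$ the union over bundles improves the complement probability to $\Theta(e^{-45})$, which is stronger than the lemma requires.
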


\begin{proof}
We will focus on two additional events $\EE_i^1(A)$ and $\EE_i^2(A)$. Let $h^*$ be an arbitrary group in $[k]\setminus \{g(i)\}$. Event $\EE_i^1(A)$ is true if and only if
\begin{align}
    v_i(A_{g(i)}) &\leq \frac{|A_{h^*}|}{2}-\frac{3d}{2}.
\end{align}
Event $\EE_i^2(A)$ is true if and only if
\begin{align}
    v_i(A_{h^*}) &\geq \frac{|A_{h^*}|}{2}.
\end{align}
Set $\eps=\frac{5d}{|A_{g(i)}|}$ and notice that 
\begin{align}\nonumber
    \Pr\left[\EE_i^1(A)\right] &= \Pr\left[v_i(A_{g(i)}) \leq \frac{|A_{h^*}|}{2}-\frac{3d}{2}\right] \geq \Pr\left[v_i(A_{g(i)})\leq \frac{|A_{g(i)}|}{2}-\frac{5d}{2}\right]\\\label{eq:ef-1}
    &= \Pr\left[v_i(A_{g(i)})\leq \frac{|A_{g(i)}|}{2} (1-\eps)\right].
\end{align}
The inequality follows by equations (\ref{eq:ef-bundle-size-bounds}), which imply that $|A_{h^*}|\geq |A_{g(i)}|-2d$.

Now, notice that the random variable $v_i(A_{g(i)})$ follows the binomial probability distribution $\bin(|A_{g(i)}|,1/2)$ with $|A_{g(i)}|$ trials (one for each item in bundle $A_{g(i)}$) and success probability $1/2$ per trial. We will bound the RHS of inequality (\ref{eq:ef-1}) by applying the reverse Chernoff bound (Lemma~\ref{lem:reverse-Chernoff}) with  $\eps=\frac{5d}{|A_{g(i)}|}$ and $t = |A_{g(i)}|$. The next claim justifies that we can indeed do so.

\begin{claim}{\label{claim:use-reverse-Chernoff}}
For $\eps=\frac{5d}{|A_{g(i)}|}$, it holds $\eps \leq 1/2$ and $\eps^2\cdot |A_{g(i)}|\geq 6$.
\end{claim}

\begin{proof}
Recall that $m\geq 11dk$ and $d=\sqrt{m/k}$. Thus, by the left part of equation (\ref{eq:ef-bundle-size-bounds}), we have $|A_{g(i)}|\geq \frac{m}{k}-d\geq 10d$ and $\eps=\frac{5d}{|A_{g(i)}|}\leq 1/2$. Furthermore, by the right part of equation (\ref{eq:ef-bundle-size-bounds}), we have $|A_{g(i)}|\leq \frac{m}{k}+d\leq \frac{12m/k}{11}$ and $\eps^2 |A_{g(i)}|=\frac{25m}{k|A_{g(i)}|}\geq 275/12 > 6$.
\end{proof}

Thus, by applying Lemma~\ref{lem:reverse-Chernoff}, inequality (\ref{eq:ef-1}) yields
\begin{align}\nonumber
    \Pr\left[\EE_i^1(A)\right] &\geq \exp\left(-\frac{9\eps^2 |A_{g(i)}|}{2}\right)=\exp\left(-\frac{225d^2}{2|A_{g(i)}|}\right) \geq \exp\left(-\frac{225d^2}{2(m/k-d)}\right)\\\label{eq:ef-2}
    &= \exp\left(-\frac{225\sqrt{m/k}}{2(\sqrt{m/k}-1)}\right)\geq \exp\left(-\frac{2375}{20}\right)> \exp(-124).
\end{align}
The first inequality follows by the left part of equation (\ref{eq:ef-bundle-size-bounds}) and the second one follows by the fact $\sqrt{m/k}\geq 11$.

The random variable $v_i(A_{h^*})$ follows the binomial probability distribution $\bin(|A_{h^*}|,1/2)$ with $|A_{h^*}|$ trials and success probability $1/2$ per trial. Thus,
\begin{align}\label{eq:ef-3}
    \Pr\left[\EE_i^2(A)\right] &\geq 1/2.
\end{align}
Clearly, since the bundles $A_{g(i)}$ and $A_{h^*}$ are disjoint, the valuations of agent $i$ for the two bundles are independent. Thus, by inequalities (\ref{eq:ef-2}) and (\ref{eq:ef-3}), we get
\begin{align}\label{eq:ef-4}
    \Pr\left[\EE_i^1(A)\wedge \EE_i^2(A)\right] &> \frac{1}{2}\cdot \exp\left(-124\right).
\end{align}
Now, observe that event $\EE_{i}(A)$ is true only if some of events $\EE_i^1(A)$ and $\EE_i^2(A)$ is false.  Hence,
\begin{align*}
    \Pr[\EE_i(A)] &\leq 1-\Pr\left[\EE_i^1(A)\wedge \EE_i^2(A)\right]< 1-\frac{1}{2}\cdot \exp\left(-124\right) \leq \exp\left(-\frac{1}{2}\cdot \exp\left(-124\right)\right)
\end{align*}
using inequality (\ref{eq:ef-4}) and the property $1-x\leq e^{-x}$ for every real $x$.
\end{proof}

Now, allocation $A\in \A$ is envy-free up to $d$ if the event $\EE_i(A)$ is true for every agent $i\in F$. These events are independent as each of them depends on the random valuations of a different agent in $F$. Thus, by Lemma~\ref{lem:ef-bound} and since $|F|=n-k$, we have
\begin{align*}
    \Pr\left[\bigwedge_{i\in F}{\EE_i(A)}\right] &< \exp\left(-\frac{n-k}{2}\cdot \exp\left(-124\right)\right)\leq \exp\left(-m\ln{k}\right)=k^{-m},
\end{align*}
as desired. The second inequality follows by the definition of $m$.

\subsection{Proof of Theorem~\ref{thm:prop}}
To prove Theorem~\ref{thm:prop}, we use our construction above with $m=\left\lfloor \frac{n-k}{2e^{77}\ln{k}}\right\rfloor$ and set $d=\sqrt{\frac{m}{k^3}}$. Together with the restriction $n\geq k+162e^{77}k\ln{k}$, these definitions imply that $d\in \Omega\left(\sqrt{\frac{n}{k^3 \ln{k}}}\right)$ and, furthermore, $m\geq 9k^2 d$ (this fact will be useful later).

Denote by $\A$ the set of allocations $A=(A_1, A_2, ..., A_k)$ of the items in $M$ to $k$ bundles so that 
\begin{align}\label{eq:prop-def-A}
|A_h| &\geq \frac{m}{k}-d 
\end{align}
for every $h\in [k]$. Clearly, an allocation that does not belong to $\A$ is not proportional up to $d$ items, as the condition for \prop$d$ would be violated for some agent in $L$. We will show that the probability that the random valuation profile is proportional up to $d$ items for an allocation $A\in \A$ is at most $k^{-m}$. Since there are fewer than $k^m$ allocations in $\A$, the probability that some of them is proportional up to $d$ items will be strictly less than $1$, completing the proof.

Let $A\in \A$. For an agent $i\in F$, we denote by $\EE_i(A)$ the event defined as 
\begin{align}\label{eq:prop-event-def}
    v_i(A_{g(i)}) &\geq \frac{1}{k}\cdot \sum_{h\in [k]}{v_i(A_h)}-d.
\end{align}
The following lemma provides the main argument in our proof.
\begin{lemma}\label{lem:prop-bound}
Consider the allocation $A\in \A$ and agent $i\in F$. Then, 
\begin{align*}
    \Pr[\EE_i(A)] &\leq \exp\left(-\frac{1}{2}\cdot \exp\left(-77\right)\right).
\end{align*}
\end{lemma}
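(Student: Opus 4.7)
The plan is to mirror the two-event structure used in Lemma~\ref{lem:ef-bound}. Set $T = |A_{g(i)}|$. From $A \in \A$ we obtain $T \geq m/k - d$, and since the bundle sizes sum to $m$ we also get $T \leq m/k + (k-1)d$. The hypothesis $n \geq k + 162 e^{77} k \ln k$ translates to $m \geq 9k^2 d$, which in turn yields $d/(m/k) \leq 1/(9k)$ and thus confines $T$ to the interval $[(17/18)(m/k), (10/9)(m/k)]$ for $k \geq 2$. Rewriting the requirement, event $\EE_i(A)$ fails if and only if $v_i([m]\setminus A_{g(i)}) - (k-1)v_i(A_{g(i)}) > kd$.

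I would introduce two auxiliary events: $\EE_i^1(A)$ asserting $v_i(A_{g(i)}) \leq T/2 - 2kd$, and $\EE_i^2(A)$ asserting $v_i([m]\setminus A_{g(i)}) \geq (m-T)/2$. When both hold, a direct calculation gives
\[ v_i([m]\setminus A_{g(i)}) - (k-1)v_i(A_{g(i)}) \geq \tfrac{m-kT}{2} + 2k(k-1)d, \]
and combining this with $m - kT \geq -k(k-1)d$ (which follows from the upper bound on $T$) lower-bounds the expression by $\tfrac{3k(k-1)}{2}\, d$, exceeding $kd$ for every $k \geq 2$. Hence $\EE_i^1(A) \wedge \EE_i^2(A)$ forces the failure of $\EE_i(A)$.

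The next step is to lower-bound the two probabilities. Since $v_i(A_{g(i)}) \sim \bin(T,1/2)$, I would apply the reverse Chernoff bound (Lemma~\ref{lem:reverse-Chernoff}) with $t = T$ and $\eps = 4kd/T$; the hypothesis $m \geq 9k^2 d$ ensures $T \geq 8kd$ and hence $\eps \leq 1/2$, while the upper bound $T \leq (10/9)(m/k)$ ensures $\eps^2 T \geq 144/10 \geq 6$. The bound then yields $\Pr[\EE_i^1(A)] \geq \exp(-72 k^2 d^2/T)$, and the estimate $d^2/T \leq (18/17)/k^2$ brings the exponent below $1296/17 < 77$, so $\Pr[\EE_i^1(A)] \geq \exp(-77)$. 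For $\EE_i^2(A)$, the variable $v_i([m]\setminus A_{g(i)}) \sim \bin(m-T,1/2)$ exceeds its mean with probability at least $1/2$ by symmetry of the binomial.

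Since the two events depend on agent $i$'s valuations over disjoint item sets they are independent, giving $\Pr[\EE_i^1(A) \wedge \EE_i^2(A)] \geq \tfrac{1}{2}\exp(-77)$. Because this joint event already precludes $\EE_i(A)$, applying $1-x \leq e^{-x}$ yields $\Pr[\EE_i(A)] \leq 1 - \tfrac{1}{2}\exp(-77) \leq \exp(-\tfrac{1}{2}\exp(-77))$, as desired. The main obstacle I anticipate is calibrating the deviation in $\EE_i^1(A)$: unlike the EF case where $\A$ constrains bundle sizes on both sides, here $A_{g(i)}$ can overshoot $m/k$ by up to $(k-1)d$, so the deviation must scale as $\Theta(kd)$ rather than $\Theta(d)$, and the coefficient must be chosen just large enough to force the implication while still leaving $\Pr[\EE_i^1(A)] \geq \exp(-77)$.
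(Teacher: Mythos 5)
Your proposal is correct and follows essentially the same route as the paper's proof: the same auxiliary events $\EE_i^1(A)$ and $\EE_i^2(A)$ with deviation $2kd$, the same reverse-Chernoff application with $\eps=\frac{4kd}{|A_{g(i)}|}$ justified via the bounds $\frac{17m}{18k}\leq |A_{g(i)}|\leq \frac{10m}{9k}$, the same implication yielding the margin $\frac{3k(k-1)}{2}d>kd$, and the same final step via $1-x\leq e^{-x}$. Your explicit if-and-only-if reformulation of the failure of $\EE_i(A)$ as $v_i([m]\setminus A_{g(i)})-(k-1)v_i(A_{g(i)})>kd$ is, if anything, a slightly cleaner presentation of the paper's contradiction argument.
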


\begin{proof}
Since the allocation $A$ belongs to $\A$, inequality (\ref{eq:prop-def-A}) implies
\begin{align}\label{eq:prop-upper-for-cal-A}
    |A_{g(i)}| &=m-\sum_{h\in [k]\setminus \{g(i)\}}{|A_h|}\leq \frac{m}{k}+(k-1)d.
\end{align}

We will argue about two additional events $\EE_i^1(A)$ and $\EE_i^2(A)$. Event $\EE_i^1(A)$ is true if and only if 
\begin{align}\label{eq:prop-1}
    v_i(A_{g(i)}) &\leq \frac{|A_{g(i)}|}{2}-2kd.
\end{align}
Event $\EE_i^2(A)$ is true if and only if
\begin{align}\label{eq:prop-2}
    \sum_{h\in [k]\setminus\{g(i)\}}{v_i(A_h)} &\geq \frac{m-|A_{g(i)}|}{2},
\end{align}

Set $\eps=\frac{4kd}{|A_{g(i)}|}$ and observe that
\begin{align}\label{eq:prop-use-reverse-Chernoff}
    \Pr[\EE_i^1(A)] & =\Pr\left[v_i(A_{g(i)}) \leq \frac{|A_{g(i)}|}{2}-2kd\right] = \Pr\left[v_i(A_{g(i)}) \leq \frac{|A_{g(i)}|}{2}\cdot (1-\eps)\right].
\end{align}
Now, notice that, for agent $i\in F$, the random variable $v_i(A_{g(i)})$ follows the binomial probability distribution $\bin(|A_{g(i)}|,1/2)$ with $|A_{g(i)}|$ trials (one for each item of bundle $A_{g(i)}$) and success probability $1/2$ per trial. We will bound the RHS of equation (\ref{eq:prop-use-reverse-Chernoff}) by applying the reverse Chernoff bound (Lemma~\ref{lem:reverse-Chernoff}) with $\eps=\frac{4kd}{|A_{g(i)}|}$ and $t = |A_{g(i)}|$. The next claim justifies that we can indeed do so.

\begin{claim}
    For $\eps=\frac{4kd}{|A_{g(i)}|}$ and $t = |A_{g(i)}|$, it holds that $\eps\leq 1/2$ and $\eps^2|A_{g(i)}| \geq 6$.
\end{claim}

\begin{proof}
Recall that $m\geq 9k^2d$ and $d=\sqrt{m/k^3}$. Thus, by inequality (\ref{eq:prop-def-A}), we have $|A_{g(i)}| \geq \frac{m}{k}-d\geq 8kd$ and $\eps=\frac{4kd}{|A_{g(i)}|}\leq 1/2$. Furthermore, by inequality (\ref{eq:prop-upper-for-cal-A}), we have $|A_{g(i)}|\leq \frac{m}{k}+(k-1)d\leq \frac{m}{k}+kd\leq \frac{10m}{9k}$ and $\eps^2t=\frac{16k^2d^2}{|A_{g(i)}|}\geq \frac{72k^3d^2}{5m}\geq 6$.
\end{proof}

Using inequality (\ref{eq:prop-def-A}) and the facts $m\geq 9k^2d$ and $k\geq 2$, we have
\begin{align}\label{eq:prop-4}
    |A_{g(i)}| &\geq \frac{m}{k}-d \geq \frac{m}{k}-\frac{m}{9k^2} \geq \frac{17m}{18k}.
\end{align}
Now, by applying Lemma~\ref{lem:reverse-Chernoff} and using inequality (\ref{eq:prop-4}), we have
\begin{align}
    \Pr[\EE_i^1(A)] & \geq \exp\left(-\frac{9\eps^2|A_{g(i)}|}{2}\right) = \exp\left(-\frac{72k^2d^2}{|A_{g(i)}|}\right) \geq \exp\left(-\frac{1296k^3d^2}{17m}\right)>\exp(-77).
\end{align}

Also, the random variable $\sum_{h\in [k]\setminus\{g(i)\}}{v_i(A_h)}$ follows the binomial probability distribution $\bin(m-|A_{g(i)}|,1/2)$ with $m-|A_{g(i)}|$ trials (one for each item not belonging to bundle $A_{g(i)}$) and success probability $1/2$ per trial. Thus, 
\begin{align}
    \Pr[\EE_i^2(A)] &\geq \frac{1}{2}.
\end{align}
Notice that the valuations of agent $i$ for the items in $A_{g(i)}$ and $\cup_{h\in [k]\setminus\{g(i)\}}{A_h}$ are selected independently and, hence, the events $\EE_i^1(A)$ and $\EE_i^2(A)$ are independent. Thus, 
\begin{align}
    \Pr[\EE_i^1(A) \wedge \EE_i^2(A)] &\geq \frac{1}{2}\exp\left(-77\right).
\end{align}

We now claim that if the events $\EE_i^1(A)$ and $\EE_i^2(A)$ are true, then the event $\EE_i(A)$ is false. Indeed, using the inequalities (\ref{eq:prop-1}) and (\ref{eq:prop-2}), we obtain that
\begin{align}\nonumber
    \sum_{h\in [k]\setminus \{g(i)\}}{v_i(A_h)}-(k-1)\cdot v_i(A_{g(i)})
    &\geq \frac{m-|A_{g(i)}|}{2}-(k-1)\cdot \left(\frac{|A_{g(i)}|}{2}-2kd\right)\\\nonumber
    &= \frac{m}{2}-\frac{k\cdot |A_{g(i)}|}{2} +2k(k-1)d\\\nonumber
    &\geq \frac{m}{2}-\frac{k}{2}\cdot \left(\frac{m}{k}+(k-1)d\right)+2k(k-1)d\\\label{eq:prop-3}
    &= \frac{3k(k-1)}{2}d > kd.
\end{align}
The second inequality follows by inequality (\ref{eq:prop-upper-for-cal-A}) and the third one by the facts $k\geq 2$ and $d>0$. Equivalently, equation (\ref{eq:prop-3}) yields 
\begin{align*}
    v_i(A_{g(i)}) &< \frac{1}{k}\cdot \sum_{h\in [k]\setminus \{g(i)\}}{v_i(A_h)}-d,
\end{align*}
contradicting the definition of event $\EE_i(A)$ in Equation~(\ref{eq:prop-event-def}). Thus, event $\EE_i(A)$ is true only if some of $\EE_i^1(A)$ and $\EE_i^2(A)$ is not true, i.e., 
\begin{align*}
    \Pr[\EE_i(A)] &\leq 1-\Pr[\EE_i^1(A)\wedge \EE_i^2(A)]\\
    &\leq 1- \frac{1}{2}\exp\left(-77\right)\\
    &\leq \exp\left(-\frac{1}{2}\exp\left(-77\right)\right).
\end{align*}
The last inequality follows by the property $1-x\leq e^x$ for every real $x$. The proof of the lemma is complete.
\end{proof}

Now, allocation $A\in \A$ is proportional up to $d$ items if the event $\EE_i(A)$ is true for every agent $i\in F$. These events are independent as each of them depends on the random valuations of a different agent. Thus, by Lemma~\ref{lem:prop-bound} and since $|F|=n-k$, we have
\begin{align*}
    \Pr\left[\bigwedge_{i\in F}{\EE_i(A)}\right] &< \exp\left(-\frac{n-k}{2}\exp\left(-77\right)\right)\leq  \exp\left(-m\ln{k}\right)=k^{-m},
\end{align*}
as desired. The second inequality follows by the definition of $m$.

\subsection{Proof of Theorem~\ref{thm:propnew}}
We prove Theorem~\ref{thm:propnew} using our construction with $m=\left\lfloor\frac{(\min\{n_1,n_2, ..., n_k\}-1)k}{2e^{96}\ln{k}} \right\rfloor$ and set $d=\sqrt{m/k}$. Together with the assumption $\min\{n_1,n_2, ..., n_k\}\geq 1+32e^{96}k^2\ln{k}$, this implies $d\leq \frac{m}{4k^2}$ (and, obviously, $d\leq \frac{m}{4k}$); these facts will be useful later in the proof. 

Again, denote by $\A$ the set of allocations $A=(A_1,A_2, ..., A_k)$ of the items in $M$ to $k$ bundles so that $|A_h|\geq m/k-d$ for every $h\in [k]$. As argued in the proof of Theorem~\ref{thm:prop}, no other allocation can be proportional up to $d$ items.

Let $A\in \A$. For an agent $i\in F$, we denote by $\EE_i(A)$ the event defined as 
\begin{align}\label{eq:event-propnew-e_i-of-A}
v_i(A_{g(i)}) &\geq \frac{1}{k}\cdot \sum_{h\in [k]}{v_i(A_h)}-d.
\end{align}

For $h\in [k]$, let $\zeta_h\geq 0$ be such that 
\begin{align}
|A_h| &= \frac{m}{k}+\left(\zeta_h-1\right)d.
\end{align}
Clearly, since $\sum_{h\in [k]}{|A_h|}=m$, it holds 
\begin{align}\label{eq:sum-of-zetas}
\sum_{h\in [k]}{\zeta_h}&=k.
\end{align}
Hence, $\zeta_h\leq k$ for every $h\in [k]$. 

The next lemma is crucial for our proof.
\begin{lemma}\label{lem:propnew-bound}
    Consider the allocation $A\in \A$ and agent $i\in F$. Then,
\begin{align*}
    \Pr[\EE_i(A)] &< \exp\left(-\frac{1}{2}\exp\left(-\frac{6\left(3+\zeta_{g(i)}\right)^2d^2k}{m}\right)\right)
\end{align*}
\end{lemma}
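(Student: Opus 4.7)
The plan is to follow the event-decomposition strategy of Lemma~\ref{lem:prop-bound}, but with a threshold in $\EE_i^1(A)$ that scales with $\zeta_{g(i)}$ so that the resulting reverse-Chernoff exponent matches the target $6(3+\zeta_{g(i)})^2 d^2 k/m$ rather than an absolute constant. Specifically, I would define
\[
\EE_i^1(A):\ v_i(A_{g(i)}) \leq \frac{|A_{g(i)}|}{2}-\frac{3+\zeta_{g(i)}}{2}d,\qquad
\EE_i^2(A):\ \sum_{h\in [k]\setminus\{g(i)\}} v_i(A_h) \geq \frac{m-|A_{g(i)}|}{2}.
\]
Since these two events concern the valuations of agent $i$ for the disjoint item sets $A_{g(i)}$ and $M\setminus A_{g(i)}$, they are independent.

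First I would verify that $\EE_i^1(A) \wedge \EE_i^2(A)$ implies $\overline{\EE_i(A)}$ deterministically. Substituting the two thresholds into $\frac{1}{k}\sum_{h\in [k]} v_i(A_h) - v_i(A_{g(i)})$ yields $\frac{m - k|A_{g(i)}|}{2k}+\frac{(k-1)(3+\zeta_{g(i)})d}{2k}$, and using the identity $k|A_{g(i)}|-m=k(\zeta_{g(i)}-1)d$ this collapses to $\frac{d}{2k}(4k-3-\zeta_{g(i)})$. Since $\zeta_h\geq 0$ for every $h$ and $\sum_h \zeta_h = k$, we have $\zeta_{g(i)}\leq k$; together with $k\geq 4$ this gives $\zeta_{g(i)}\leq k < 2k-3$, hence $\frac{d}{2k}(4k-3-\zeta_{g(i)})>d$, contradicting $\EE_i(A)$.

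Next I would lower-bound $\Pr[\EE_i^1(A)]$ by applying Lemma~\ref{lem:reverse-Chernoff} to $v_i(A_{g(i)})\sim \bin(|A_{g(i)}|,1/2)$ with $\eps=(3+\zeta_{g(i)})d/|A_{g(i)}|$, obtaining $\Pr[\EE_i^1(A)]\geq \exp\!\left(-\frac{9(3+\zeta_{g(i)})^2 d^2}{2|A_{g(i)}|}\right)$. The hypotheses $\eps\leq 1/2$ and $\eps^2|A_{g(i)}|\geq 6$ follow from $\zeta_{g(i)}\leq k$ together with the fact that the assumption $\min_h n_h \geq 1+32e^{96}k^2\ln k$ implies $m\geq 16k^3$ and hence $d\geq 4k$. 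The event $\EE_i^2(A)$ has probability at least $1/2$ because a symmetric binomial is at least its mean with probability at least $1/2$. By independence, $\Pr[\EE_i^1(A)\wedge \EE_i^2(A)]\geq \frac{1}{2}\exp\!\left(-\frac{9(3+\zeta_{g(i)})^2 d^2}{2|A_{g(i)}|}\right)$.

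Finally, I would replace $|A_{g(i)}|$ by an explicit lower bound. The chain $|A_{g(i)}|\geq m/k - d \geq m/k - m/(4k^2)\geq 3m/(4k)$ (where the middle step uses the bound $d\leq m/(4k^2)$ established at the start of the proof) gives $\frac{9(3+\zeta_{g(i)})^2 d^2}{2|A_{g(i)}|}\leq \frac{6(3+\zeta_{g(i)})^2 d^2 k}{m}$, so $\Pr[\overline{\EE_i(A)}]\geq \frac{1}{2}\exp\!\left(-\frac{6(3+\zeta_{g(i)})^2 d^2 k}{m}\right)$. Applying $1-x\leq e^{-x}$ to $\Pr[\EE_i(A)]\leq 1-\Pr[\EE_i^1(A)\wedge \EE_i^2(A)]$ yields the stated bound. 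The main delicacy I foresee is tuning the constant $(3+\zeta_{g(i)})/2$ in the definition of $\EE_i^1(A)$ so that simultaneously (i) the deterministic implication holds across the full admissible range $\zeta_{g(i)}\in[0,k]$, forcing $k\geq 4$, and (ii) the reverse-Chernoff exponent shrinks to at most $6(3+\zeta_{g(i)})^2 d^2 k/m$, forcing $d\leq m/(4k^2)$; these are exactly the quantitative hypotheses built into the statement.
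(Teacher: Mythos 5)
Your proposal is correct and follows essentially the same route as the paper's proof: your $\EE_i^1(A)$ threshold $\frac{|A_{g(i)}|}{2}-\frac{3+\zeta_{g(i)}}{2}d$ is identical to the paper's $\frac{m}{2k}-2d$ via the identity $|A_{g(i)}|=\frac{m}{k}+(\zeta_{g(i)}-1)d$, you apply the reverse Chernoff bound with the same $\eps=\frac{(3+\zeta_{g(i)})d}{|A_{g(i)}|}$, and you conclude with the same independence and $1-x\leq e^{-x}$ steps. The only (immaterial) deviations are that your $\EE_i^2(A)$ uses the exact mean $\frac{m-|A_{g(i)}|}{2}$ as threshold, where the paper uses the slightly smaller $\frac{m(k-1)}{2k}-\frac{(k-1)d}{2}$, and that you verify the reverse-Chernoff hypotheses by direct bounds from $\zeta_{g(i)}\leq k$ and $d\leq \frac{m}{4k^2}$ rather than by the paper's monotonicity-in-$\zeta_{g(i)}$ argument; both choices check out.
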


\begin{proof}
    We will argue about two additional events $\EE_i^1(A)$ and $\EE_i^2(A)$. Event $\EE_i^1(A)$ is true if and only if 
\begin{align}\label{eq:propnew-event-1}
    v_i(A_{g(i)}) &\leq \frac{m}{2k}-2d,
\end{align}
while event $\EE_i^2(A)$ is true if and only if
\begin{align}\label{eq:propnew-event-2}
    \sum_{h\in [k]\setminus\{g(i)\}}{v_i(A_h)} &\geq \frac{m(k-1)}{2k}-\frac{(k-1)d}{2}.
\end{align}
Using the definition of $\zeta_{g(i)}$ and setting $\eps=\frac{(3+\zeta_{g(i)})d}{|A_{g(i)}|}$, we have
\begin{align}\nonumber
    \Pr\left[\EE_i^1(A)\right] &=\Pr\left[v_{i(A_{g(i)})} \leq \frac{m}{2k}-2d\right]\\\nonumber
    &=\Pr\left[v_i(A_{g(i)}) \leq \frac{|A_{g(i)}|}{2}-\frac{3+\zeta_{g(i)}}{2}d\right]\\\label{eq:propnew-use-reverse-Chernoff}
    &=\Pr\left[v_i(A_{g(i)}) \leq \frac{|A_{g(i)}|}{2}(1-\eps)\right].
\end{align}
Now, notice that, for agent $i\in F$, the random variable $v_i(A_{g(i)})$ follows the binomial probability distribution $\bin(|A_{g(i)}|,1/2)$ with $|A_{g(i)}|$ trials (one for each item of bundle $A_{g(i)}$) and success probability $1/2$ per trial. We will bound the RHS of equation (\ref{eq:propnew-use-reverse-Chernoff}) by applying the reverse Chernoff bound (Lemma~\ref{lem:reverse-Chernoff}) with $\eps=\frac{(3+\zeta_{g(i)})d}{|A_{g(i)}|}$ and $t=|A_{g(i)}|$. The next claim justifies that we can indeed do so.

\begin{claim}
    For $\eps=\frac{(3+\zeta_{g(i)})d}{|A_{g(i)}|}$ and $t=|A_{g(i)}|$, it holds that $\eps\leq 1/2$ and $\eps^2t \geq 6$.
\end{claim}

\begin{proof}
By the definition of $\zeta_{g(i)}$, we have $\eps=\frac{(3+\zeta_{g(i)})d}{|A_{g(i)}|}= \frac{(3+\zeta_{g(i)})d}{m/k+(\zeta_{g(i)}-1)d}$, which is non-decreasing for $\zeta_{g(i)}\in [0,k]$ since $d\leq \frac{m}{4k}$. Thus, since $k\geq 4$ and $d\leq \frac{m}{4k^2}$, we get $\eps\leq \frac{(3+k)d}{m/k+(k-1)d}\leq \frac{2k^2d}{m}\leq \frac{1}{2}$.

Also, by the definition of $\zeta_{g(i)}$, we have $\eps^2 t = \frac{(3+\zeta_{g(i)})^2d^2}{\frac{m}{k} + (\zeta_{g(i)}-1)d}$. The derivative of this expression w.r.t $\zeta_{g(i)}$ has the same sign as the
quantity 
\begin{align*}
    &2(3+\zeta_{g(i)})\left(\frac{m}{k} + (\zeta_{g(i)}-1)d\right) - (3 + \zeta_{g(i)})^2d\\
    &= (3 + \zeta_{g(i)}) \left( \frac{2m}{k} + d(\zeta_{g(i)} - 5\right)
    \geq \left(3 + \zeta_{g(i)}\right)\left(\frac{2m}{k} - 5d\right) \geq 0.
\end{align*}
Thus, it is minimized for $\zeta_{g(i)} = 0$ and, hence, $\eps^2t \geq \frac{9d^2}{m/k-d} \geq \frac{12d^2k}{m} \geq 6$,
as desired. The second last inequality follows by the fact $d\leq \frac{m}{4k}$ and the last inequality follows since  $d=\sqrt{m/k}$, by definition.
\end{proof}

We get 
\begin{align}\nonumber
    \Pr\left[\EE_i^1(A)\right] &\geq \exp\left(-\frac{9(3+\zeta_{g(i)})^2d^2}{2|A_{g(i)}|}\right)=\exp\left(-\frac{9(3+\zeta_{g(i)})^2d^2}{2(m/k+(\zeta_{g(i)}-1)d}\right)\\\label{eq:propnew-event-1-prob}
    &\geq \exp\left(-\frac{6(3+\zeta_{g(i)})^2d^2k}{m}\right).
\end{align}
Furthermore,
\begin{align}\nonumber
    \Pr\left[\EE_i^2(A)\right] &= \Pr\left[\sum_{h\in [k]\setminus \{g(i)\}}{v_i(A_h)} \geq \frac{m(k-1)}{2k}-\frac{(k-1)d}{2}\right]\\\nonumber
    &=\Pr\left[\sum_{h\in [k]\setminus\{g(i)\}}{v_i(A_h)} \geq \sum_{h\in [k]\setminus\{g(i)\}}{\frac{|A_h|-(\zeta_h-1)d}{2}}-\frac{k-1}{2}d\right]\\\label{eq:propnew-event-2-prob}
    &\geq \Pr\left[\sum_{h\in [k]\setminus\{g(i)\}}{v_i(A_h)} \geq \frac{1}{2}\sum_{h\in [k]\setminus\{g(i)\}}{|A_h|}\right]\geq \frac{1}{2}.
\end{align}
The second last inequality follows since $\zeta_{g(i)}\leq k$. The last inequality follows since the random variable $\sum_{h\in [k]\setminus\{g(i)\}}{v_i(A_h)}$ follows the binomial probability distribution $\bin\left(\sum_{h\in [k]\setminus\{g(i)\}}{|A_h|},1/2\right)$ and the LHS is just the probability that this random variable is at least its mean.

We now claim that event $\EE_i(A)$ is false when events $\EE_i^1(A)$ and $\EE_i^2(A)$ are true. Indeed, assuming that inequalities (\ref{eq:propnew-event-1}) and (\ref{eq:propnew-event-2}) are true, we obtain
\begin{align}\nonumber
    \frac{1}{k}\sum_{h\in [k]}{v_i(A_h)}-v_i(A_{g(i)}) &=\frac{1}{k}\sum_{h\in [k]\setminus\{g(i)\}}{v_i(A_h)}-\frac{k-1}{k}v_i(A_{g(i)}) \\\nonumber
    &\geq \frac{m(k-1)}{2k^2}-\frac{(k-1)d}{2k}-\frac{m(k-1)}{2k^2}+\frac{2(k-1)}{k}d\\\nonumber
    &\geq \frac{3(k-1)}{2k}d> d.
\end{align}
The last inequality follows since $k\geq 4$.

Now, observe that the events $\EE_i^1(A)$ and $\EE_i^2(A)$ are independent as they refer to the valuation of agent $i\in F$ for disjoint sets of items. Hence, using equations (\ref{eq:propnew-event-1-prob}) and (\ref{eq:propnew-event-2-prob}), we get
\begin{align*}
    \Pr\left[\EE_i(A)\right] &\leq 1-\Pr\left[\EE_i^1(A)\right]\cdot \Pr\left[\EE_i^2(A)\right]\\
    &\leq \exp\left(-\Pr\left[\EE_i^1(A)\right]\cdot \Pr\left[\EE_i^2(A)\right]\right)\\
    &\leq  \exp\left(-\frac{1}{2}\exp\left(-\frac{6(3+\zeta_{g(i)})^2d^2k}{m}\right)\right).
 \end{align*}
This completes the proof of the lemma.\end{proof}

Let $\ell=\min\{n_1, n_2, ..., n_k\}-1$. For $h\in [k]$, denote by $F_h$ a set of $\ell$ non-leader agents from group $h$. Let $\widetilde{F}=\cup_{h\in [k]}{F_h}$. 

Now, observe that allocation $A\in \A$ is proportional up to $d$ items only if the event $\EE_i(A)$ is true for every agent $i\in \widetilde{F}$. These events are independent as each of them depends on the random valuations of a different agent. Thus, by Lemma~\ref{lem:propnew-bound}, we have
\begin{align}\nonumber
\Pr\left[\bigwedge_{i\in \widetilde{F}}{\EE_i(A)}\right] &< \prod_{i\in \widetilde{F}}{\exp\left(-\frac{1}{2}\exp\left(- \frac{6\left(3+\zeta_{g(i)}\right)^2d^2k}{m}\right)\right)}\\\nonumber
&=\prod_{h\in [k]}{\prod_{i\in F_h}{\exp\left(-\frac{1}{2}\exp\left(- \frac{6\left(3+\zeta_{g(i)}\right)^2d^2k}{m}\right)\right)}}\\\nonumber
&=\prod_{h\in [k]}{\exp\left(-\frac{\ell}{2}\exp\left(- \frac{6\left(3+\zeta_{h}\right)^2d^2k}{m}\right)\right)}\\\label{eq:use-jensen}
&=\exp\left(-\frac{\ell}{2}\sum_{h\in [k]}{\exp\left(- \frac{6\left(3+\zeta_h\right)^2d^2k}{m}\right)}\right)
\end{align}
We can easily verify that the function $\exp(-c\cdot (3+z)^2)$ is convex in $[0,+\infty)$ for $c\geq 1/18$. Then, using Jensen's inequality and equation (\ref{eq:sum-of-zetas}), we have 
\begin{align}\label{eq:convexity}
\sum_{h\in [k]}{\exp\left(-c\cdot (3+\zeta_h)^2\right)} &\geq \exp\left(-c\cdot \left(3+\frac{1}{k}\sum_{h\in [k]}{\zeta_h}\right)^2\right)=\exp\left(-16c\right).
\end{align}
Using equation (\ref{eq:convexity}) for $c=\frac{6d^2k}{m}=6$, equation (\ref{eq:use-jensen}) yields
\begin{align*}
    \Pr\left[\bigwedge_{i\in \widetilde{F}}{\EE_i(A)}\right] &<\exp\left(-\frac{k\ell}{2}\exp\left(-\frac{96d^2k}{m}\right)\right)=\exp\left(-\frac{k\left(\min\{n_1,n_2,...,n_k\}-1\right)}{2e^{96}}\right)\\
    &\leq \exp\left(-m\ln{k}\right)=k^{-m},
\end{align*}
as desired. The first equality follows by the definition of $d$ and $\ell$ and the second inequality follows by the definition of $m$.

\section{Conclusion}\label{sec:open}
We have presented an improved lower bound on multi-color discrepancy and improved lower bounds on how much we should relax the fairness notions of consensus division, envy-freeness, and proportionality when sets of indivisible items have to be allocated in $k$ groups of $n$ agents in total. There is still a small gap of $\Theta\left(\sqrt{\ln{k}}\right)$ from the currently known upper bounds for multi-color discrepancy and consensus division, and larger gaps for the other two fair division properties. Closing these gaps are the obvious open problems that stem from our work. Exploring additional implications of our multi-color discrepancy lower bound is an interesting task as well. We suspect that it will find applications to areas completely different than fair division, which has been our focus here.

\section*{Acknowledgements}
Ioannis Caragiannis and Sudarshan Shyam were partially supported by the Independent Research Fund Denmark (DFF) under grant 2032-00185B.
Kasper Green Larsen is co-funded by a DFF Sapere Aude Research Leader Grant No. 9064-00068B by the Independent Research Fund Denmark and co-funded by the European Union (ERC, TUCLA, 101125203). Views and opinions expressed are however those of the author(s) only and do not necessarily reflect those of the European Union or the European Research Council. Neither the European Union nor the granting authority can be held responsible for them.

\bibliographystyle{ACM-Reference-Format}
\bibliography{references}

\end{document}